\documentclass{article}

\usepackage[a4paper, margin=1in]{geometry} 
\usepackage{setspace}
\DeclareUnicodeCharacter{00A0}{ }
\usepackage{graphicx}
\usepackage[english]{babel}
\usepackage{hyperref}
\usepackage{subcaption}
\usepackage{comment}
\usepackage{caption}
\usepackage{float}
\usepackage{flafter}  
\usepackage{placeins}

\usepackage{amsmath}
\usepackage{mathtools,amssymb,bm,mathrsfs}
\usepackage{amsfonts}
\usepackage{amsthm}
\usepackage[nopatch]{microtype}
\usepackage{booktabs}

\usepackage[numbers]{natbib}
\newcommand{\autocite}[1]{\cite{#1}}

\usepackage{authblk}

\usepackage{tikz}
\usetikzlibrary{backgrounds}
\usetikzlibrary{external}
\usetikzlibrary{decorations.shapes,arrows,calc,decorations.markings,shapes,decorations.pathreplacing,shapes.arrows}
\usetikzlibrary{positioning}
\tikzset{main node/.style={circle,fill=blue!20,draw,inner sep=1pt},}

\newtheorem{theorem}{Theorem}
\newtheorem{definition}{Definition}
\newtheorem{example}{Example}
\newtheorem{lemma}{Lemma}

\theoremstyle{remark}
\newtheorem{remark}{Remark}

\newcommand{\G}{\mathcal{G}}
\newcommand{\E}{\mathcal{E}}
\newcommand{\V}{\mathcal{V}}
\newcommand{\limt}{\lim_{t\rightarrow\infty}}
\newcommand{\piz}{\pi^{*}}
\newcommand{\pie}{\pi^{(E)}}
\newcommand{\map}[3]{#1:#2 \rightarrow #3}

\newcommand{\real}{{\mathbb{R}}}

\newcommand{\Tresmaprand}{\Phi}
\newcommand{\bg}{\mathbf{G}}
\newcommand{\bv}{\mathbf{V}}
\newcommand{\be}{\mathbf{E}}

\newcommand{\bb}{\mathbf{B}}
\DeclareMathOperator{\Var}{Var}

\allowdisplaybreaks

\title{P\'olya Threshold Graphs\thanks{
This work was funded in part by the Natural Sciences and Engineering Research Council (NSERC) of Canada.
The authors are with the Department of Mathematics and Statistics, Queen's University, Kingston, ON, Canada (Emails: \{19jy98,fa,bahman.gharesifard\}@queensu.ca).}
}

\author{Jinghan Yu, Fady Alajaji, Bahman Gharesifard}

\date{}

\begin{document}

\maketitle

\begin{abstract}
We introduce the P\'olya threshold graph model and derive its stochastic and algebraic properties. This random threshold graph is generated sequentially via a two-color P\'olya urn process. Starting from an empty graph, each time step involves a draw from the urn that produces an indicator variable, determining whether a newly added node is universal (connected to all existing nodes and itself) or isolated (connected to no existing nodes). This construction yields a random threshold graph with an adjacency matrix that admits an explicit representation in terms of the draw sequence. Using the structure of the P\'olya draw process, we derive the exact degree distribution for any arbitrary node, including its mean and variance. Furthermore, we evaluate a distance-based decay centrality score and provide an explicit expression for its expectation. On the algebraic side, we explicitly characterize the Laplacian matrix of the random threshold graph, obtaining a closed-form description of its spectrum and corresponding eigenbasis. Finally, as an application of these structural results, we analyze discrete-time consensus dynamics on P\'olya threshold graphs.

\vspace{1em}
\noindent\textbf{Keywords:} P\'olya urn, random threshold graphs, Laplacian spectrum, degree distribution, decay centrality, consensus dynamics
\end{abstract}

\section{Introduction}
In this paper, we define a class of threshold graphs in which nodes and edges are generated randomly and sequentially via a two-color P\'olya urn process. At each time step, a draw from the urn determines whether the newly added node is an isolated node or a universal node, and each realization of this process produces a threshold graph. Such a construction yields a random threshold graph that provides a reinforcement-driven dependent extension of sequential threshold graph generation.

Threshold graphs have been an important topic in both graph theory and  network science. Following~\autocite{golumbic2004algorithmic, NM-UP:95, bapat2013adjacency}, threshold graphs admit several equivalent characterizations, including a weight and threshold inequality and a sequential binary construction formed by repeatedly adding either universal or isolated nodes. They serve as fundamental models for systems exhibiting a nested structure and specific dynamic properties, see~\autocite{AH-PS-DS:06}. Furthermore, the rich algebraic structure of deterministic threshold graphs has motivated extensive studies on their Laplacian spectra, including the characterization of Laplacian controllability for multi-agent networks~\autocite{Aguilar2015}, spanning trees~\autocite{bapat_spanning_trees}, critical groups~\autocite{critical_group_threshold}, and maximum Laplacian energy~\autocite{max_laplacian_energy}.

It is natural to consider random versions of threshold graphs. \citet{Reilly2000} have introduced natural equivalent models for random threshold graphs and used them to derive a variety of graph-theoretic properties. Diaconis \textit{et al.} further studied the threshold graph limits and placed several random threshold graph models into a broader probabilistic framework~\cite{diaconis2008threshold}. Subsequent work has also examined structural questions such as connectivity in random threshold graph models~\autocite{6544546}.

Motivated by prior works on random threshold graphs, we introduce the class of P\'olya threshold graphs, in which the binary node type construction is generated by a reinforced P\'olya urn process rather than by an independent sampling rule. This is a natural choice because the sequential construction of threshold graphs is binary; meanwhile, the P\'olya urn process also provides a canonical way to generate dependent sequences through reinforcement. Recent work has also shown that the P\'olya urn process can be used to generate other classes of random graphs, such as preferential attachment graphs~\autocite{Singh_2024}. As a result, the model remains analytically tractable while allowing the node types to be generated through the dependent random process.

Exploiting this tractability, our work is structured into three main parts, after providing background material on the P\'olya urn model and random threshold graphs in Section~\ref{sec:background}. First, in Section~\ref{section:properties}, we present an explicit adjacency matrix representation and characterize the stochastic properties of the P\'olya threshold graph, deriving the exact degree distribution for an arbitrary node $V_i$, together with its mean and variance, as well as an expression for the expected decay centrality score. Second, also in Section~\ref{section:properties}, we establish the algebraic properties of the model, including the Laplacian matrix, the Laplacian spectrum and a corresponding eigenbasis.  Third, in Section~\ref{section:consensus}, we study the discrete-time consensus dynamics on connected P\'olya threshold graphs, compare the theory with simulations, and explore an extension to the finite-memory P\'olya urn setting. Finally, we conclude the paper in Section~\ref{sec:conclusion}. 

\section{Preliminaries}\label{sec:background}

\subsection{P\'olya urn model}\label{polya-urn}

We start by recalling the setup for the classical P\'olya urn model~\autocite{polya1930}. Consider an urn equipped with red and black balls. At the initial time $t = 0$, the urn has $R>0$ red balls and $B>0$ black balls, for a total of $T=R+B$ initial balls. At each discrete time step $ t \geq 1 $, we draw a ball from the urn at random, and place it back along with $\Delta >0$ extra balls of the same color. We allow the number of balls to be a positive real number, since draws depend only on color proportions; allowing positive real ball counts does not alter the model. We naturally define the draw indicator random variable $Z_t$ at time $t $ as follows
\begin{align}
    Z_t = 
    \begin{cases}
        1& \text{if draw at time $t$ is red,}\\
        0 & \text{if draw at time $t$ is black.} 
    \end{cases}
    \label{equation:draw}
\end{align}
It is well-known that the stochastic process $\{Z_t\}_{t=1}^\infty$ is \emph{exchangeable} (and thus strictly stationary), see~\autocite{polya1930,alajaji2002communication}, where by exchangeability we mean that for any positive integer $\ell$, the draw variables $Z_1,Z_2,\ldots,Z_\ell$ have an invariant joint distribution with respect to all permutations of the indices $1,2,\ldots,\ell$. 
Specifically, following~\autocite{alajaji2002communication}, the joint probability of observing a specific length-$n$ sequence $z^n =(z_1,\ldots, z_n) \in \{0,1\}^n$ with exactly $k = \sum_{i=1}^n z_i$ red draws is given by
\begin{align}\label{jointprobofPolya_product}
    P(Z_1 = z_1,\cdots,Z_n = z_n) = \frac{\prod_{i=0}^{k-1}(R+i\Delta)\prod_{j=0}^{n-k-1}(B+j\Delta)}{\prod_{m=0}^{n-1}(R+B+m\Delta)}.
\end{align}
By defining the urn's initial red balls proportion $\rho = \frac{R}{R+B}$ and the reinforcement parameter $\delta = \frac{\Delta}{R+B}$, the joint distribution above can be written as 
\begin{align}\label{jointProbofPolya}
    P(Z_1 = z_1,\cdots,Z_n = z_n) = \frac{\Gamma(\frac{1}{\delta})\Gamma(\frac{\rho}{\delta}+k)\Gamma(\frac{1-\rho}{\delta}+n-k)}{\Gamma(\frac{\rho}\delta)\Gamma(\frac{1-\rho}{\delta})\Gamma(\frac{1}{\delta}+n)},
\end{align}
where $\Gamma(x)=\int_0^\infty t^{x-1}e^{-t}dt$ is the \textit{Gamma function}.
The above distribution exhibits a Beta-Binomial structure, which will be instrumental in deriving the properties of our model in Section~\ref{section:properties}.

The P\'olya urn-based draw process $\{Z_t\}_{t=1}^\infty$ enjoys many other interesting properties, including the fact that 
its sample average $\frac{1}{n} \sum_{i=1}^nZ_i$ for $n\in\mathbb{Z}_{>0}$ converges almost surely as $n$ grows without bound to a random variable governed by the Beta distribution, $\text{Beta}\big(\frac{\rho}{\delta}, \frac{1-\rho}{\delta}\big)$, with parameters $\frac{\rho}{\delta}$ and $ \frac{1-\rho}{\delta}$; in particular, this implies that the draw process is non-ergodic, see~\autocite{polya1930}.
 
\subsection{Random threshold graphs}

In this section, we introduce \textit{threshold graphs}, following~\autocite{NM-UP:95}. An undirected graph $\bg_n$ with $n$ nodes is a pair $(\bv_n,\be_n)$, where $\bv_n$ is the node set of size $n$, and $\be_n \subseteq \bv_n \times \bv_n$ is the edge set. Moreover, we assume that a graph can have self-loops, i.e., an edge from a node $v$ to itself so that $(v,v)\in\be_n$. Throughout this paper, all graphs are undirected; when we write $(u,v)\in \be$, we also mean $(v,u)\in \be$.

\begin{definition}
    Given a size-$n$ graph $\bg_n = (\bv_n,\be_n)$, where $\bv_n =\{v_1,\ldots,v_n\}$, the  adjacency matrix $A_n$ of $\bg_n$ is an $n\times n$ matrix, defined as 
    \begin{align}
    A_n = [a_{ij}],
    \end{align}
    where for $i,j= 1,2,\ldots, n$,
    \begin{align}
        a_{ij}:=
        \begin{cases}
            1  &\text{if $(v_i,v_j)\in\be_n$,}\\
            0 &\text{otherwise.}
        \end{cases}
    \end{align}
\end{definition}

\begin{definition}\label{definition:thresholdgraph}
    A graph $\bg_n = (\bv_n,\be_n)$ with $n$ nodes is called a \textit{threshold graph} if there exists a weight function $\Tresmaprand:\bv_n\rightarrow\mathbb{R}$ and a real number $\tau>0$ such that for any $v_i, v_j\in \bv_n$, including when $i = j$, 
    \begin{align}\label{equation:weight}
    \Tresmaprand(v_i)+\Tresmaprand(v_j) >\tau \textit{\quad if and only if\quad} (v_i,v_j)\in \be_n,
    \end{align}
    where $i,j \in \{1,\ldots,n\}$.
\end{definition}

The next result provides an equivalent characterization of threshold graphs.

\begin{lemma}\label{lemma:thresholdgraph}
    A graph $ \bg_n=(\bv_n,\be_n) $ is a \textit{threshold graph} if and only if there exists a binary $n$-tuple $z^n= (z_1,\ldots, z_n)$, where $z_t \in\{0,1\}$ for all $t\in \{1,\ldots,n\}$, such that  $\bg_n$ can be constructed sequentially, up to node relabeling, using $z^n$ from an empty graph at time step $t=0$ by repeatedly adding a node $v_t$ at time steps $t\in\{1,\ldots,n\}$ that either connects to all existing nodes (i.e., nodes $v_1,\ldots, v_{t-1}$) and itself when $ z_t=1 $ (called a \textit{universal node}), or connects to no existing nodes (including itself) when $ z_t=0$ (called an \textit{isolated node}).

\end{lemma}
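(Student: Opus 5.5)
We prove the two directions separately. Throughout, loops matter: by Definition~\ref{definition:thresholdgraph} with $i=j$, a node $v$ carries a self-loop iff $2\Tresmaprand(v)>\tau$. This is consistent with the construction in Lemma~\ref{lemma:thresholdgraph}, where universal nodes get a loop and isolated nodes do not.

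\emph{($\Leftarrow$) Construction gives a threshold graph.} Suppose $\bg_n$ is built from $z^n$. We must exhibit weights and a threshold. We define the weights backwards in time so that later nodes dominate earlier ones. Take $\tau=1$. For a universal node $v_t$ ($z_t=1$), set $\Tresmaprand(v_t)=1-\epsilon_t$. For an isolated node ($z_t=0$), set $\Tresmaprand(v_t)=\epsilon_t$. The numbers $0<\epsilon_t<1/4$ are strictly decreasing in $t$, so later nodes have weights closer to $1$ or to $0$.

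Now check pairs $i\le j$, where the later node $v_j$ decides the edge.
\begin{itemize}
\item If $z_j=1$: $\Tresmaprand(v_i)+\Tresmaprand(v_j)\ge \epsilon_i+1-\epsilon_j>1$, since $\epsilon_j<\epsilon_i$ when $i<j$. When $i=j$ the sum is $2-2\epsilon_j>1$.
\item If $z_j=0$: $\Tresmaprand(v_i)+\Tresmaprand(v_j)\le 1-\epsilon_i+\epsilon_j<1$, and $2\epsilon_j<1$ when $i=j$.
\end{itemize}
So the inequality holds exactly for the edges of the construction, loops included, and $\tau=1>0$ as required.

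\emph{($\Rightarrow$) A threshold graph arises from a construction.} We proceed by induction on $n$; the case $n=0$ is trivial. Given $\bg_n$ with weights $\Tresmaprand$ and threshold $\tau$, pick $u$ maximizing $\Tresmaprand$ and $w$ minimizing $\Tresmaprand$.

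\emph{Key claim:} either $u$ is adjacent to every node including itself, or $w$ is adjacent to no node including itself.

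Suppose $u$ is not universal. Then there is some $x$ with $\Tresmaprand(u)+\Tresmaprand(x)\le\tau$. For any $y$ we have $\Tresmaprand(w)+\Tresmaprand(y)\le \Tresmaprand(x)+\Tresmaprand(u)\le\tau$, using $\Tresmaprand(w)\le\Tresmaprand(x)$ and $\Tresmaprand(y)\le\Tresmaprand(u)$. Hence $w$ has no edges at all, itself included.

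Let $v$ be the node this claim produces ($u$ or $w$). Removing $v$ leaves an induced subgraph, and the restricted $\Tresmaprand$ and $\tau$ still certify it as a threshold graph. By induction it is built by some $z^{n-1}$. Appending $v$ as node $n$, with $z_n=1$ if $v=u$ is universal and $z_n=0$ if $v=w$ is isolated, reproduces $\bg_n$ up to relabeling.

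The main obstacle is the key claim above, in particular handling self-loops correctly. This is why the claim requires $u$ to be adjacent to itself as well, matching the "connects to itself" convention in the construction. The $(\Leftarrow)$ direction is a routine choice of perturbations.
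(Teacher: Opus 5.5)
Your proof is correct and follows essentially the same route as the paper. In the forward direction you peel off either a maximum-weight universal node or a minimum-weight isolated node and build $z^n$ backwards; the paper's test $\Phi(v_1^*)+\Phi(v_n^*)\le\tau$ is exactly your dichotomy, since $u$ is universal iff $\Phi(u)+\Phi(w)>\tau$. In the converse you use perturbed weights that make later nodes more extreme, as in the paper's choice $\tau/2\pm\epsilon_i$ with $\epsilon_i$ increasing. Your key claim, with its explicit handling of self-loops, also supplies the justification for the peeling step that the paper leaves implicit.
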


Note that the above lemma is equivalent to stating that the resulting edge set $\be^*_t$ at time $ t \in \{1,\ldots, n\}$ satisfies 
    \begin{align}\label{edge-set-eq}
        \be_t^* = 
        \begin{cases}
            \be^*_{t-1} &\textit{if $z_t = 0$}, \\
            \be^*_{t-1}\cup\{(v_t, v_1),\ldots,(v_t,v_t)\}&\textit{if $z_t =1$},
        \end{cases}
    \end{align}
    with $\be_0^*=\emptyset$ and $\be_n^*=\be_n$.   

\begin{proof}[Proof of Lemma~\ref{lemma:thresholdgraph}]
\hfill

\textbf{Forward part:} Consider a threshold graph $\bg_n = (\bv_n, \be_n)$. By definition, this implies that $\bg_n$ is equipped with a weight function $\Tresmaprand$ and a number $\tau>0$ such that $ (v_i,v_j) \in \be_n$ if and only if $ \Tresmaprand(v_i)+\Tresmaprand(v_j)>\tau$.

Let $\bv_n^* = \{v_1^*,\ldots,v_n^*\}$ be the node set obtained by relabeling $\bv_n$ in non-decreasing order of weight, i.e., 
\[
\Tresmaprand(v_1^*)\leq\Tresmaprand(v_2^*)\leq\ldots\leq\Tresmaprand(v_n^*).
\]
To prove the forward part, we construct the binary indicator sequence $z^n= (z_1,\ldots,z_n)$ as follows.
\begin{itemize}
    \item[(1)] At each step, choose two nodes with smallest and largest weight, denoted $v_1^*$ and $v_n^*$ in the first step.
    \item[(2)] If $\Tresmaprand(v_1^*)+\Tresmaprand(v_n^*) \leq \tau$, declare $v_1^*$ as an isolated node. Construct $z^n$ backwards, starting from $z_n$ by setting $z_n= 0$, and define $\bv_{n-1}^* = \bv_n^*\setminus \{v_1^*\}$.
    \item[(3)] If $\Tresmaprand(v_1^*)+\Tresmaprand(v_n^*) > \tau$, declare $v_n^*$ as a universal node setting $z_n= 1$, and define $\bv_{n-1}^* = \bv_n^*\setminus \{v_n^*\}$.
    \item At iteration $j\in\{1,\ldots,n\}$, apply steps $(1)-(3)$ using $\bv^*_{n-j+1}$ until there is no node left (i.e., until $\bv_0^*=\emptyset$ is obtained from $\bv_1^*$ at the end of step $j = n$).
\end{itemize}
The iterative procedure above generates the entire binary indicator vector $z^n= (z_1,\ldots, z_n)$.

Starting with an empty graph $\bg_0^* = (\bv^*_0, \be_0^*)=(\emptyset,\emptyset)$, and repeatedly adding nodes to $\bv^*_0$ and edges to $\be^*_0$ based on $z^n$ and \eqref{edge-set-eq} for $t\in\{1,\ldots,n\}$, in this way we obtain $\be^*_n=\be_n$, as desired.

\textbf{Converse part:} Suppose the graph $\bg_n = (\bv_n, \be_n)$ is a graph which is constructed according to~\eqref{edge-set-eq} via the $n$-tuple $z^n=(z_1,\ldots,z_n)$ of indicator variables. Consider the following weight function on the nodes:
\[
    \Tresmaprand(v_i) = \begin{cases}
         \frac{\tau}{2} + \epsilon_i &\text{\quad if $z_i = 1$},\\
        \frac{\tau}{2} - \epsilon_i &\text{\quad if $z_i = 0$},\\
    \end{cases}\\
\]
where $0<\epsilon_1 < \epsilon_2<\cdots<\epsilon_n<\frac{\tau}{2}$, and $i=1,\ldots,n$.

Consider any pair of nodes $v_i, v_j \in \bv_n$. First suppose that $i=j$. If $z_i=0$, then $(v_i,v_i) \notin \be_n$ and $\Tresmaprand(v_i)+\Tresmaprand(v_i)=\tau-2\epsilon_i<\tau $, and  \eqref{equation:weight} holds. If $ z_i=1 $, then  $\Tresmaprand(v_i)+\Tresmaprand(v_i)=\tau+2\epsilon_i >\tau  $, and \eqref{equation:weight} holds. From this point on, assume without loss of generality that $ i>j $. If $ z_i=0$, then $ (v_i,v_j) \notin \be_n $. Hence $\Tresmaprand(v_i)+\Tresmaprand(v_j)\leq \tau-\epsilon_i+\epsilon_j<\tau $, since $ \epsilon_j <\epsilon_i $, and \eqref{equation:weight} holds. Finally, if $ z_i=1$, then $ (v_i,v_j) \in \be_n $. Hence
$\Tresmaprand(v_i)+\Tresmaprand(v_j)\geq \tau+\epsilon_i-\epsilon_j>\tau $, since $ \epsilon_j <\epsilon_i $, and \eqref{equation:weight} holds. 

\end{proof}

We provide an example next.

\begin{example}
{\em
    Let $\bg_5 = (\bv_5,\be_5) $ be as shown in Figure~\ref{examplegraph1}. 
    \begin{figure}[htbp]
    \centering
    \begin{tikzpicture}
        \node[main node] (1) {$v_1$};
        \node[main node] (2) [left of=1]  {$v_2$};
        \node[main node] (3) [below of = 2] {$v_3$}; 
        \node[main node] (4) [below  of= 1] {$v_4$};
        \node[main node] (5) [below right of= 1] {$v_5$};

        \path[draw,-]
        (1) edge  (2)
        (2) edge  (3)
        (2) edge  (4)
        (1) edge[loop right] (1)
        (2) edge[loop left] (2);
    \end{tikzpicture}
    \caption{A threshold graph of size $5$.}\label{examplegraph1}
    \end{figure}
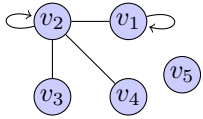
    
    Fix the threshold $\tau = 1$, and consider the weight function $\Tresmaprand$ given by 
    \[
    \begin{cases}
        &\Phi(v_1) = 0.6,\\
        &\Phi(v_2) = 0.8,\\
        &\Phi(v_3) = 0.4,\\
        &\Phi(v_4) = 0.4,\\
        &\Phi(v_5) = 0.1.
    \end{cases}
    \]   
    Since
    $
    \Tresmaprand(v_5)<\Tresmaprand(v_3)=\Tresmaprand(v_4)<\Tresmaprand(v_1)<\Tresmaprand(v_2),
    $
    we obtain the relabeled node set $\bv_n^*=\{v_1^*,v_2^*,v_3^*,v_4^*,v_5^*\}$, where $v_1^* = v_5, v_2^* =v_3, v_3^* = v_4, v_4^* = v_1$ and $v_5^* =v_2$.
    \begin{itemize}
        \item  Now for the first iteration, since $\Tresmaprand(v_1^*)+\Tresmaprand(v_5^*)=0.9<1$, we set $z_5 = 0$, remove $v_1^*$ from $\bv_5^*$ and get $\bv_4^*=\{v_2^*,v_3^*,v_4^*,v_5^*\}$.
        \item For the second iteration, considering the lowest and highest weight nodes in $\bv_4^*$, $v_2^*$ and $v_5^*$, respectively, we have that $\Tresmaprand(v_2^*) +\Tresmaprand(v_5^*)=1.2>1$. Hence we set $z_4 = 1$ and remove $v_5^*$ from $\bv_4^*$, to obtain $\bv_3^*=\{v_2^*,v_3^*,v_4^*\}$.
        \item For the third iteration, taking $v_2^*$ and $v_4^*$ from $\bv_3^*$, we have that $\Tresmaprand(v_2^*)+\Tresmaprand(v_4^*) = 1$. Hence we set $z_3=0$ and remove $v_2^*$ from $\bv_3^*$, which results in the set $\bv_2^*=\{v_3^*,v_4^*\}$. 
        \item For the fourth iteration, we have $\Tresmaprand(v_3^*)+\Tresmaprand(v_4^*) = 1$. Thus we set $z_2 = 0$, and remove $v_3^*$ from $\bv_2^*$, to obtain $\bv_1^*=\{v_4^*\}$.
        \item Finally, since $\Tresmaprand(v_4^*)+\Tresmaprand(v_4^*)=1.2>1$, we set $z_1 = 1$, and remove node $v_4^*$. As no nodes are left (i.e., $\bv_0^*=\emptyset$), we stop the iterative process.
    \end{itemize}
    After five iterations, we have $z^5 = (z_1,z_2,z_3,z_4, z_5) =(1,0,0,1,0)$, and the threshold graph constructed via $z^5$ and \eqref{edge-set-eq} is identical to the original graph $\bg_5$ of Figure~\ref{examplegraph1}.
}
\end{example}

In this paper, we deal with threshold graphs which are generated at random. More specifically, we say a graph $\G_n=(\V_n,\E_n)$ is a \emph{random graph} if its nodes and edges connecting them are generated according to a random process.
Moreover, lowercase $v$ is used for nodes of deterministic graphs $\bg_n=(\bv_n,\be_n)$, and uppercase $V$ is used for nodes of random graphs $\G_n=(\V_n,\E_n)$.
We highlight here the different notation which we consistently use throughout the paper to distinguish between deterministic graphs and random graphs. And in our work here, we say that a random graph $\G_n=(\V_n,\E_n)$ is a {\em random threshold graph} if its random node set $\V_n$ and edge set $\E_n$ satisfy Definition~\ref{definition:thresholdgraph}.

Lemma~\ref{lemma:thresholdgraph} shows that one can equivalently define a random threshold graph of size $n$ by starting with an empty graph and repeatedly adding an isolated node or a universal node according to \eqref{edge-set-eq}.

We next provide a counterpart to Lemma~\ref{lemma:thresholdgraph} for random threshold graphs, whose proof we omit, as it is nearly identical to the proof of Lemma~\ref{lemma:thresholdgraph}. 
\begin{lemma}\label{rand-lemma:thresholdgraph}
    A random graph $ \G_n=(\V_n,\E_n)$ with $n$ nodes is a random threshold graph if and only if there exists a binary $n$-tuple of indicator random variables $Z^n=(Z_1,\ldots,Z_n)$ such that $\G_n$ can be constructed, up to node relabeling, using $Z^n$ from an empty graph at time step $t = 0$ by repeatedly adding a node $V_t$ at time steps $t\in\{1,\ldots,n\}$ that either connects to all existing nodes (i.e., nodes $V_1,\ldots, V_{t-1}$) and itself when $Z_t=1$ (called a universal node), or connects to no existing nodes (including itself) when $Z_t=0$ (called an isolated node). 
\end{lemma}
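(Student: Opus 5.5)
The plan is to reduce Lemma~\ref{rand-lemma:thresholdgraph} to Lemma~\ref{lemma:thresholdgraph} by working realization by realization. A random graph $\G_n=(\V_n,\E_n)$ is a map $\omega\mapsto \G_n(\omega)$ into the finite set of graphs on $n$ labeled nodes. By our convention, $\G_n$ is a random threshold graph exactly when each realization $\G_n(\omega)$ satisfies Definition~\ref{definition:thresholdgraph} for some weight function and threshold, which may depend on $\omega$. Both directions then follow by applying the deterministic lemma to each $\G_n(\omega)$. The only new ingredient is checking that the resulting indicator tuple $Z^n$ is a genuine random vector, i.e., measurable.

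\textbf{Forward part.} Fix $\omega$ and apply the forward part of Lemma~\ref{lemma:thresholdgraph} to $\G_n(\omega)$. This yields a binary tuple $z^n(\omega)$ which reconstructs $\G_n(\omega)$ up to relabeling via \eqref{edge-set-eq}. To make the choice canonical, I will not pick weights arbitrarily. Instead, I will use a weight-free version of the peeling procedure:
\begin{itemize}
\item at each step, if the current graph has a node adjacent to all remaining nodes (itself included), remove it and record $1$;
\item otherwise, remove an isolated node (with no loop) and record $0$.
\end{itemize}
In the weighted procedure, the max-weight node is universal exactly when $\Phi(v_1^*)+\Phi(v_n^*)>\tau$, and otherwise the min-weight node is isolated. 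So this rule reproduces the same kind of output. Because it depends only on the graph, it defines a deterministic function $f$ from the finite set of threshold graphs on $n$ nodes to $\{0,1\}^n$. Setting $Z^n=f(\G_n)$ then gives a measurable random vector, since $f$ has finite domain and every function on a finite set is measurable.

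\textbf{Converse part.} Suppose $\G_n$ is built from $Z^n$ via \eqref{edge-set-eq}. For each $\omega$, the converse part of Lemma~\ref{lemma:thresholdgraph}, applied to $z^n=Z^n(\omega)$, gives the explicit weights $\tau/2\pm\epsilon_i$, with fixed $\epsilon_i$ independent of $\omega$. Hence $\G_n(\omega)$ satisfies Definition~\ref{definition:thresholdgraph} for every $\omega$.

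\textbf{Main obstacle.} The step needing care is the forward measurability and well-definedness. I need to confirm that the weight-free peeling rule always succeeds on a threshold graph, meaning that at every step some node is either universal or isolated. This follows because the weighted argument in Lemma~\ref{lemma:thresholdgraph} exhibits such a node at every stage, and induced subgraphs of threshold graphs are threshold graphs. Relabeling issues are absorbed by the phrase ``up to node relabeling''.
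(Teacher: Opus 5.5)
Your proposal is correct and matches the paper's intended argument. The paper omits the proof as nearly identical to that of Lemma~\ref{lemma:thresholdgraph}, meaning one applies the deterministic forward and converse constructions to each realization, exactly as you do; your graph-intrinsic peeling rule (with any fixed tie-breaking, e.g., smallest label) adds one refinement the paper leaves implicit, namely that $Z^n$ is an explicit deterministic, hence measurable, function of $\G_n$.
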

Lemma~\ref{rand-lemma:thresholdgraph} implies that the evolving edge set $\E_t^*$ at time $t\in\{1,\ldots, n\}$ satisfies \eqref{edge-set-eq} and $\E_n^*=\E_n$.
It is important to note that the notion of random threshold graph and its characterization above are independent of the random process used to generate the indicator random variables. That said, the underlying stochastic process determines the properties of the generated random graphs. In the next section, we specify the stochastic process which we are interested in.

\section{P\'olya threshold graph model and properties}\label{section:properties}
We introduce the notion of a P\'olya threshold graph and study its stochastic and algebraic properties. Throughout, for each $i\in\{1,\ldots,n\}$, the notation $V_i$ denotes the node added to the graph at time step $i$.

As stated, we use the two-color P\'olya urn model as a way to generate random threshold graphs in a sequential manner. 
Starting with an empty graph and a two-color P\'olya urn, at time steps $i\geq 1$, we draw one ball from the P\'olya urn and obtain the indicator draw variable $Z_i\in\{0,1\}$ using \eqref{equation:draw}, i.e., $Z_i=1$ if $i$-th draw is red and $Z_i=0$ if $i$-th draw is black. Given the observed $Z_i$, we classify the node type, add the new node $V_i$ and edges to the graph according to the threshold rule \eqref{edge-set-eq}, i.e., if $Z_i = 1$, then $V_i$ is a \textit{universal node}, if $Z_i= 0$, then $V_i$ is an \textit{isolated node}.
 We repeat this draw process until we obtain a random threshold graph $ \G_n=(\V_n,\E_n)$ of size $n$.
We call the random threshold graph generated in this fashion a \textit{P\'olya threshold graph}.

To describe the adjacency matrix of our threshold graph, we observe that the edge between any pair of nodes depends only on the draw indicator of the more recently added node (i.e., the node with the largest time index). 
Therefore, the adjacency matrix $A_n = [a_{ij}], i,j\in\{1,\ldots,n\}$ of the P\'olya threshold graph is the random variable given by:
\begin{align}
    A_n = [a_{ij}]=[Z_{ \max(i,j)}]=
        \begin{bmatrix}
            Z_1 &Z_2 &Z_3&\cdots&Z_n\\
            Z_2&Z_2&Z_3 &\cdots&Z_n\\
            Z_3&Z_3&Z_3&\cdots&Z_n\\
            \vdots&\vdots&\vdots&\ddots&\vdots\\
            Z_n&Z_n&Z_n&\cdots&Z_n\\
        \end{bmatrix}.\label{eq:polyathreshold-adjacency}
\end{align}
The trace of the adjacency matrix is given as
\[
\mathrm{Tr}(A_n) = \sum_{i = 1}^na_{ii}.
\]
Furthermore, for each node $V_i$ in the graph, its degree $\deg(V_i)$, which counts the number of edges connected to $V_i$ including a potential self-loop, is a random variable determined by draw variables $Z_i,\ldots, Z_n$, as follows:
\begin{align}\label{equation:degreeofVi}
    \deg(V_i) = iZ_i + Z_{i+1}+\ldots+Z_n = iZ_i +\sum_{j=i+1}^nZ_j, 
\end{align}
where $i \in\{1,\ldots, n\}$. 

We begin our analysis by quantifying the stochastic properties of the P\'olya threshold graph induced by the urn reinforcement. A fundamental quantity of interest in any random graph model is the degree distribution of its nodes, as it directly dictates the network's connectivity. As shown in \eqref{equation:degreeofVi}, the degree of any node $V_i$ relies on both its own draw indicator $Z_i$ and the partial sum of the subsequent draw variables. Consequently, the exchangeability and the Beta-Binomial structure of the P\'olya urn process play a central role in our analysis. To derive the exact degree distribution, we first recall the standard definition of the Beta function and highlight the Beta-Binomial distribution exhibited by the indicator draw variables.

\begin{definition}[Beta Function]
The \textit{Beta function} $\map{\bb}{\real_{>0}\times\real_{>0}}{\real_{>0}}$ is given by:
    \[
    \bb(x,y)=\frac{\Gamma(x)\Gamma(y)}{\Gamma(x+y)}.
    \]
\end{definition}

In light of~\eqref{jointProbofPolya}, we have that
\begin{align}
P\bigg(\sum_{i=1}^nZ_i=k\bigg) &= \binom{n}{k}\frac{\bb(\frac{\rho}{\delta}+k,\frac{1-\rho}{\delta}+n-k)}{\bb(\frac{\rho}{\delta},\frac{1-\rho}{\delta})}, \qquad k\in \{0,\ldots, n\}, \label{beta-binom}
\end{align}
where $\rho = \frac{R}{R+B}$ and $\delta = \frac{\Delta}{R+B}$, see also~\autocite{helfand2013polya}. The right-hand side of~\eqref{beta-binom} is nothing but the Beta-Binomial distribution with parameters $n$, $\frac{\rho}{\delta}$ and $\frac{1-\rho}{\delta}$.

Moreover, since the normalized trace of the adjacency matrix satisfies $\frac{1}{n}\mathrm{Tr}(A_n)=\frac{1}{n}\sum_{i=1}^nZ_i$, it follows from Section~\ref{polya-urn} that $\frac{1}{n}\mathrm{Tr}(A_n)$ converges almost surely as $n\rightarrow\infty$ to a Beta random variable $\mathrm{Beta}\big(\frac{\rho}{\delta}, \frac{1-\rho}{\delta}\big)$. 

We next develop the stochastic properties of the P\'olya threshold graph generated by drawing $n$ times from a P\'olya urn with $T= R+B$ initial balls ($R$ red and $B$ black) and the reinforcement parameter $\Delta>0$. Throughout, we let $Z^n= (Z_1,\ldots,Z_n)$ denote the sequence of draw indicators generated by the P\'olya urn-based on~\eqref{equation:draw}. More specifically, we derive the distribution of the degree of any node $V_i$ of the graph, its mean and variance, as well as its centrality score. We first show that the expected degree is the same for all nodes.

\begin{lemma}
    For $i\in\{1,\ldots,n\}$, the expected degree of node $V_i$ satisfies
\begin{align}
    E(\deg(V_i)) = n\rho,
\end{align}
where $\rho=\frac{R}{R+B}$.
\label{theorem:expectation}
\end{lemma}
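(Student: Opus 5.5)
The plan is to start from the degree representation \eqref{equation:degreeofVi}, $\deg(V_i) = iZ_i + \sum_{j=i+1}^n Z_j$, and apply linearity of expectation. This gives
\begin{align*}
E(\deg(V_i)) = i\,E(Z_i) + \sum_{j=i+1}^n E(Z_j).
\end{align*}
The only work is therefore to show that $E(Z_j) = P(Z_j = 1) = \rho$ for every $j \geq 1$. Granting that, the sum equals $i\rho + (n-i)\rho = n\rho$, independent of $i$, which is the claim.

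To show $P(Z_j=1)=\rho$, I will use the exchangeability of $\{Z_t\}$ recalled in Section~\ref{polya-urn}. Exchangeability implies that every $Z_j$ has the same marginal law as $Z_1$. Explicitly, summing the joint law \eqref{jointprobofPolya_product} over $z^j$ with $z_j=1$ equals the same sum with the roles of coordinates $1$ and $j$ swapped, and that sum is $P(Z_1=1)$. Since the first draw is made from the initial urn, $P(Z_1=1)=\frac{R}{R+B}=\rho$. This also follows directly from \eqref{jointprobofPolya_product} with $n=1$, $k=1$.

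As a sanity check, I would also verify the result through the Beta-Binomial marginal \eqref{beta-binom}. Its mean is $n\cdot\frac{\rho/\delta}{\rho/\delta+(1-\rho)/\delta} = n\rho$. By exchangeability, $E(\sum_{t=1}^n Z_t) = nE(Z_1)$, which again gives $E(Z_1)=\rho$.

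There is no real obstacle. The only step requiring care is justifying that the marginals are identical, and that is supplied by the exchangeability (strict stationarity) of the P\'olya draw process stated earlier.
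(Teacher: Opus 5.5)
Your proposal is correct and follows essentially the same route as the paper: linearity of expectation applied to $\deg(V_i) = iZ_i + \sum_{j=i+1}^n Z_j$, then exchangeability to get $P(Z_j=1)=P(Z_1=1)=\rho$ for every $j$, giving $i\rho+(n-i)\rho=n\rho$. Your extra justification of the identical marginals (via the permutation-invariant joint law and the Beta-Binomial mean) is a slightly more explicit version of what the paper simply cites from exchangeability.
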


\begin{proof}
Using the identity in \eqref{equation:degreeofVi} and the linearity of expectation,
\begin{align}
     E(\deg(V_i)) &= E(iZ_i + Z_{i+1} +\cdots + Z_n)\\
    &=E(iZ_i) + E(Z_{i+1}) + \ldots + E(Z_n)\\
    &=iE(Z_i) + \sum_{j=i+1}^nE(Z_j).
\end{align}
By the exchangeability of the draw process $\{Z_i\}_{t=1}^\infty$, noted in Section~\ref{polya-urn}, we have
\begin{equation}
    P(Z_t = 1) = P(Z_1 = 1),
\end{equation}
for all $t\geq 1$. Since $P(Z_i=1) = \rho$, for all $i\in\{1,\ldots,n\}$, we have
\begin{align*}
        E(\deg(V_i)) &=  i\rho +\sum_{j=i+1}^n\rho
        =  n\rho,
\end{align*}
as claimed.
\end{proof}

We now derive the marginal distribution of $\deg(V_i)$. We first clarify that the range of $ \deg(V_i) $ is the set $ D_{i,n}$ defined as follows
\begin{align}\label{rangeofdegV_i}
    D_{i,n} = \begin{cases}
            \{0,\ldots, n\},& \text{ if } i\leq n-i \Longleftrightarrow 2i\leq n, \\
            \{0,\ldots, n-i\}\cup\{i,\ldots,n\}, & \text{ if } i\geq n-i \Longleftrightarrow 2i\geq n.
        \end{cases}
\end{align}
Indeed, if $Z_i = 1$, then $V_i $ is a universal node and has degree at least $i$, and there are up to $(n-i)$ nodes that may become connected to $V_i$ when generated from time steps $i+1,\ldots,n$; hence the largest value that $ \deg(V_i) $  can take is $ n $, which means that $k\in\{i,\ldots, n\}$. For $Z_i = 0$, $V_i$ is an isolated node and there are up to $n-i$ nodes that may be connected to $V_i$ when added at time steps $i+1,\ldots, n$; hence the largest value that $\deg(V_i)$ can take is $n-i$, which means, we have $k\in\{0,\ldots, n-i\}$. Finally, note that when $2i\leq n$, we have $\{0,\ldots, n-i\}\cup\{i,\ldots,n\} = \{0,\ldots, n\}$. We have the following result.
\begin{theorem} 
For the P\'olya threshold graph, the probability distribution of $\deg(V_i)$ is given by
\begin{align}\label{equation:probability_of_degree}
        P(\deg(V_i) = k) = \binom{n-i}{k}g_i^{(n-i+1)}(k)\mathbb{I}(k\leq n-i) + \binom{n-i}{k-i}g_i^{(n-i+1)}(k-i+1)\mathbb{I}(k\geq i),
\end{align}
for $k\in D_{i,n}$, where $\mathbb{I}(\cdot)$ denotes the indicator function, and
\begin{align}
    g_i^{(n-i+1)}(k) = \frac{\Gamma\big(\frac{1}{\delta}\big)\Gamma\big(\frac{\rho}{\delta}+k\big)\Gamma\big(\frac{1-\rho}{\delta}+n-i+1-k\big)}{\Gamma\big(\frac{\rho}{\delta}\big)\Gamma\big(\frac{1-\rho}{\delta}\big)\Gamma\big(\frac{1}{\delta}+n-i+1\big)},
\end{align}
with $\delta = \frac{\Delta}{T}$.
\label{theorem:degreeprob}
\end{theorem}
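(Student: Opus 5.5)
The plan is to decompose $\deg(V_i)$ according to the value of $Z_i$, using the representation \eqref{equation:degreeofVi}: $\deg(V_i)=iZ_i+S_i$, where $S_i=\sum_{j=i+1}^n Z_j$. The events $\{Z_i=0\}$ and $\{Z_i=1\}$ partition the sample space, so
\[
P(\deg(V_i)=k)=P(Z_i=0,\,S_i=k)+P(Z_i=1,\,S_i=k-i).
\]
The first term can be nonzero only when $0\le k\le n-i$. The second can be nonzero only when $0\le k-i\le n-i$, i.e.\ $i\le k\le n$. These conditions produce the two indicator functions in \eqref{equation:probability_of_degree}. When both indicators are on, which can happen for $2i\le n$, the two contributions simply add because the events are disjoint.

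To evaluate each term, I work with the block $(Z_i,Z_{i+1},\ldots,Z_n)$ of length $m=n-i+1$. By the exchangeability (hence strict stationarity) of $\{Z_t\}$ recalled in Section~\ref{polya-urn}, this block has the same joint law as $(Z_1,\ldots,Z_m)$. One way to see this is to marginalize \eqref{jointProbofPolya} over $Z_1,\ldots,Z_{i-1}$, which by consistency of the finite-dimensional distributions gives the law of the first $m$ draws after permuting indices. Consequently, any fixed binary string of length $m$ with exactly $r$ ones has probability given by \eqref{jointProbofPolya} with $n$ replaced by $m$ and $k$ replaced by $r$. That probability is precisely $g_i^{(n-i+1)}(r)$.

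Now I count strings. The event $\{Z_i=0,\,S_i=k\}$ consists of strings whose first entry is $0$ and which have $k$ ones among the remaining $n-i$ positions. There are $\binom{n-i}{k}$ such strings, each with $r=k$ ones in total, giving $\binom{n-i}{k}g_i^{(n-i+1)}(k)$. The event $\{Z_i=1,\,S_i=k-i\}$ consists of strings whose first entry is $1$ and which have $k-i$ ones among the remaining $n-i$ positions. There are $\binom{n-i}{k-i}$ such strings, each with $r=k-i+1$ ones in total, giving $\binom{n-i}{k-i}g_i^{(n-i+1)}(k-i+1)$. Summing the two contributions yields \eqref{equation:probability_of_degree} on $D_{i,n}$.

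The only step needing care is the stationarity claim, namely that the shifted block has the law given by \eqref{jointProbofPolya} with length $n-i+1$. Exchangeability gives it directly, since permuting the indices $1,\ldots,n$ to move positions $i,\ldots,n$ to the front preserves the joint law. The rest is bookkeeping of the total number of ones in each case, in particular remembering that $Z_i=1$ itself contributes one extra red draw, which is why the argument of $g$ becomes $k-i+1$.
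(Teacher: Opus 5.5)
Your proposal is correct and takes essentially the same route as the paper. Both arguments split on $Z_i$ as in Lemma~\ref{lemma:degreefunction}. Both use exchangeability to move the block $(Z_i,\ldots,Z_n)$ to the front, multiply by the $\binom{n-i}{k}$ or $\binom{n-i}{k-i}$ arrangements, and evaluate each string with the P\'olya joint law \eqref{jointProbofPolya} for $n-i+1$ draws, where the extra red draw from $Z_i=1$ gives the argument $k-i+1$.
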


We first prove an intermediate result.

\begin{lemma}\label{lemma:degreefunction}
    For all $i\in\{1,\ldots, n\}$ and $k\in D_{i,n}$,
    \begin{equation}\label{equation:degreefunction}
        P(\deg(V_i)=k) = P\Big(Z_i = 1, \sum^{n}_{j = i+1}Z_j = k-i\Big)\mathbb{I}(k\geq i) + P\Big(Z_i = 0, \sum_{j = i+1}^n Z_j = k\Big)\mathbb{I}(k\leq n-i).
    \end{equation}
    
\end{lemma}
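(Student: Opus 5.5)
The plan is to condition on the value of $Z_i$ and apply the law of total probability, using the degree identity \eqref{equation:degreeofVi}, namely $\deg(V_i)=iZ_i+S_i$, where we write $S_i:=\sum_{j=i+1}^n Z_j$. Since $Z_i\in\{0,1\}$, the events $\{Z_i=1\}$ and $\{Z_i=0\}$ partition the sample space. Hence for every $k\in D_{i,n}$,
\begin{align*}
P(\deg(V_i)=k) = P(Z_i=1,\ i+S_i=k) + P(Z_i=0,\ S_i=k).
\end{align*}
On the first event the degree equals $i+S_i$, so $\{Z_i=1,\deg(V_i)=k\}=\{Z_i=1,S_i=k-i\}$. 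On the second event the degree equals $S_i$, so $\{Z_i=0,\deg(V_i)=k\}=\{Z_i=0,S_i=k\}$.

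It remains to attach the indicator functions. Because $S_i$ is a sum of $n-i$ binary variables, it takes values only in $\{0,\ldots,n-i\}$. The event $\{S_i=k-i\}$ is therefore empty unless $0\le k-i$, and the upper constraint $k-i\le n-i$ holds automatically for $k\le n$. Hence the first term equals itself multiplied by $\mathbb{I}(k\geq i)$. Likewise $\{S_i=k\}$ is empty unless $k\le n-i$, since $k\ge 0$ always holds, so the second term equals itself multiplied by $\mathbb{I}(k\le n-i)$. This yields \eqref{equation:degreefunction}.

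There is no real obstacle here. The only care needed is to check that multiplying by the indicators loses nothing for $k\in D_{i,n}$. When the indicator vanishes, the corresponding probability is already zero, so inserting it is harmless. The indicators simply record which branch ($Z_i=1$ or $Z_i=0$) can contribute, consistent with the description of $D_{i,n}$ in \eqref{rangeofdegV_i}.
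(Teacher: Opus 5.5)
Your proposal is correct and follows the same route as the paper: split on $Z_i\in\{0,1\}$ using $\deg(V_i)=iZ_i+\sum_{j=i+1}^n Z_j$, then attach the indicators according to which branch can produce the value $k$. Your version is more explicit than the paper's, because you use the law of total probability and the range of $\sum_{j=i+1}^n Z_j$ to show that inserting the indicators loses nothing.
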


\begin{proof}[Proof]
In light of the expression in \eqref{rangeofdegV_i} for the range $D_{i,n}$ of $\deg(V_i)$, we have that if $Z_i = 1$, then $k\in\{i,\ldots, n\}$; if $Z_i = 0$, then $k\in\{0,\ldots,n-i\}$.
    This yields the stated two-term decomposition in \eqref{equation:degreefunction}.

\end{proof}

\begin{proof}[Proof of Theorem~\ref{theorem:degreeprob}]
By Lemma~\ref{lemma:degreefunction} and the exchangeability of the draw process, we have

\begin{align*}
        P(\deg(V_i) = k)  &= P\Big(Z_i = 1, \sum^{n}_{j = i+1}Z_j = k-i\Big)\mathbb{I}(k\geq i) + P\Big(Z_i = 0, \sum_{j = i+1}^n Z_j = k\Big)\mathbb{I}(k\leq n-i)\\
        &= \binom{n-i}{k-i}P(Z_1 = 1,\ldots, Z_{k-i+1} = 1, Z_{k-i+2} = 0,\ldots Z_{n-i+1}= 0)\mathbb{I}(k\geq i)\\*
        & \qquad +\binom{n-i}{k}P(Z_1 = 1,\ldots, Z_k = 1, Z_{k+1} = 0,\ldots Z_{n-i+1} = 0)\mathbb{I}(k\leq n-i).
\end{align*}
Using the joint law of the P\'olya draw random variables given in~\eqref{jointprobofPolya_product}-\eqref{jointProbofPolya} yields
\begin{align}        
        P(\deg(V_i) = k)&=  \binom{n-i}{k-i}\biggl[\frac{\Gamma\big(\frac{\rho}{\delta}+(k-i+1)\big)\Gamma\big(\frac{1-\rho}{\delta}+(n-k)\big)\Gamma\big(\frac{1}{\delta}\big)}{\Gamma\big(\frac{\rho}{\delta}\big)\Gamma\big(\frac{1-\rho}{\delta}\big)\Gamma\big(\frac{1}{\delta}+(n-i+1)\big)}  \biggr]\mathbb{I}(k\geq i)\nonumber\\*
        & \qquad +\binom{n-i}{k} \biggl[  \frac{\Gamma\big(\frac{\rho}{\delta}+k\big)\Gamma\big(\frac{1-\rho}{\delta}+(n-k-i+1)\big)\Gamma\big(\frac{1}{\delta}\big)}{\Gamma\big(\frac{\rho}{\delta}\big)\Gamma\big(\frac{1-\rho}{\delta}\big)\Gamma\big(\frac{1}{\delta}+(n-i+1)\big)}\biggr]\mathbb{I}(k\leq n-i)\\
         &=  \binom{n-i}{k-i}g_i^{(n-i+1)}(k-i+1)\mathbb{I}(k\geq i) + \binom{n-i}{k}g_i^{(n-i+1)}(k)\mathbb{I}(k\leq n-i),\nonumber
\end{align}
where the last equation is obtained by using the fact that $\Gamma(x+1) = x\Gamma(x)$ for any positive real number.
\end{proof}

To prepare for the variance calculation, we first introduce the rising factorial and the Chu-Vandermonde identity.

\begin{definition}[\cite{steffensen2013interpolation}] \label{risingfactorial}
For $x\in\mathbb{R}$ and integer $n\ge 0$, the rising factorial is given by
    \begin{equation}
        (x)^{(n)} = x(x+1)\cdots(x+n-1) = \prod_{k=0}^{n-1}(x+k).
    \end{equation}
\end{definition}

\begin{lemma}[Chu-Vandermonde Identity]\label{chu-vandermonde}\autocite{graham1994concrete}
For $n\in \mathbb{Z}_{\geq 0}$ and $s,t\in\real$,
    \begin{equation}
        (s+t)^{(n)} = \sum_{k=0}^{n} \binom{n}{k} s^{(k)} t^{(n-k)}.
    \end{equation}
\end{lemma}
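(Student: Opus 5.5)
We prove the Chu--Vandermonde identity by comparing coefficients in a generating-function identity. This is the cleanest route for arbitrary real $s,t$. The key observation is that the rising factorial has the exponential generating function
\begin{align*}
\sum_{n\ge 0} \frac{x^{(n)}}{n!}\,u^n = (1-u)^{-x}, \qquad |u|<1,
\end{align*}
valid for every real $x$. This is just the generalized binomial series, since $\binom{-x}{n}(-1)^n = x^{(n)}/n!$. Once we have it, the identity follows from $(1-u)^{-s}(1-u)^{-t}=(1-u)^{-(s+t)}$: multiply the two series via the Cauchy product and compare the coefficients of $u^n$. This gives $\frac{(s+t)^{(n)}}{n!}=\sum_{k=0}^n \frac{s^{(k)}}{k!}\frac{t^{(n-k)}}{(n-k)!}$, and multiplying by $n!$ yields the claim.

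The only point that needs care is justifying the product of the series. For $|u|<1$ both series converge absolutely, by the ratio test: $|x^{(n+1)}/x^{(n)}|/(n+1)\to 1$. So the Cauchy product is legitimate, and equality of analytic functions on the unit disc gives equality of the coefficients. Alternatively, one can work purely with formal power series in $u$. Then no convergence is needed, and the identity $(1-u)^{-s}(1-u)^{-t}=(1-u)^{-(s+t)}$ holds formally because it holds for all positive integers $s,t$, or by differentiating logarithms.

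As a backup, a fully elementary route is induction on $n$. Write $(s+t)^{(n+1)}=(s+t)^{(n)}\bigl((s+k)+(t+n-k)\bigr)$ inside each summand of the inductive hypothesis. Then use $s^{(k)}(s+k)=s^{(k+1)}$ and $t^{(n-k)}(t+n-k)=t^{(n-k+1)}$, and recombine the terms with Pascal's rule $\binom{n}{k}+\binom{n}{k-1}=\binom{n+1}{k}$. The main obstacle in this version is only the bookkeeping of the index shift. The base case $n=0$ is trivial, since both sides equal $1$. Either way the argument is short, and we would present the inductive version since it uses nothing beyond Definition~\ref{risingfactorial}.
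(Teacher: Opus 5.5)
Your proof is correct. The paper gives no argument for this lemma of its own; it only cites \cite{graham1994concrete}. The usual textbook derivation divides by $n!$ and reads the identity as Vandermonde's convolution $\binom{a+b}{n}=\sum_k\binom{a}{k}\binom{b}{n-k}$ at $a=-s$, $b=-t$, using $x^{(k)}/k!=(-1)^k\binom{-x}{k}$.

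Your generating-function argument is the analytic form of that same reduction, since it extracts coefficients from $(1-u)^{-s}(1-u)^{-t}=(1-u)^{-(s+t)}$. Your inductive argument is genuinely self-contained. It uses only Definition~\ref{risingfactorial} and Pascal's rule, needs no convergence or polynomial-extension step, and holds verbatim for $s,t$ in any commutative ring. That makes it the right one to present. The index bookkeeping you describe checks out, with the conventions $\binom{n}{-1}=\binom{n}{n+1}=0$.

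If you keep the generating-function version, tidy two points:
\begin{itemize}
\item Your ratio test divides by $x^{(n)}$, which vanishes for $n>-x$ when $x$ is a nonpositive integer. In that case the series is a polynomial and the claim is just the finite binomial theorem. You should still mention the case, because the lemma is stated for all real $s,t$.
\item Your formal-series remark that the identity "holds because it holds for positive integers" should be stated as a polynomial argument. The $u^n$-coefficients of both sides are polynomials in $(s,t)$; they agree on $\mathbb{Z}_{>0}^2$, so they agree identically.
\end{itemize}

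The paper uses the lemma only once, in the proof of Theorem~\ref{theorem:variance}, with $s=\frac{\rho}{\delta}+1>0$ and $t=\frac{1-\rho}{\delta}>0$. Any of your versions suffices there.
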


We next compute the variance of $\deg(V_i)$.
\begin{theorem}
    For each $i\in\{1,\ldots,n\}$, the variance of $\deg(V_i)$ for node $V_i$ is given by
    \begin{align}
        \Var(\deg(V_i))& = \Big(1+\frac{2i}{\frac{1}{\delta}+n-i}\Big)\Bigg[\frac{(n-i)\rho\big[(n-i)(\delta+\rho)+1-\rho\big]}{1+\delta}\Bigg]+i^2\rho \nonumber\\
        &\qquad \qquad + \Big(\frac{2i\rho^2(n-i)}{1+n\delta-i\delta}\Big)-(n\rho)^2.
    \end{align}
    \label{theorem:variance}
\end{theorem}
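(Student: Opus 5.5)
We will compute the variance as $\Var(\deg(V_i))=E(\deg(V_i)^2)-(n\rho)^2$, using Lemma~\ref{theorem:expectation} for the mean. Write $m=n-i$ and $S=\sum_{j=i+1}^n Z_j$, so that by \eqref{equation:degreeofVi} $\deg(V_i)=iZ_i+S$. Since $Z_i^2=Z_i$ and $E(Z_i)=\rho$,
\begin{align*}
E(\deg(V_i)^2)=i^2\rho+2i\,E(Z_iS)+E(S^2).
\end{align*}
It therefore suffices to evaluate $E(S^2)$ and the cross moment $E(Z_iS)$. The target formula suggests organizing $E(Z_iS)$ in terms of $E(S^2)$ and $E(S)$.

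\textbf{Second moment of $S$.} By exchangeability, $S$ has the same law as $\sum_{j=1}^m Z_j$. Hence by \eqref{beta-binom} it is Beta-Binomial with parameters $m$, $a=\rho/\delta$ and $b=(1-\rho)/\delta$, where $a+b=1/\delta$. We compute its factorial moment $E(S(S-1))$. Expanding $S(S-1)$ as a sum over ordered pairs of distinct indices and applying exchangeability with \eqref{jointprobofPolya_product} gives
\begin{align*}
E(S(S-1))=m(m-1)P(Z_1=Z_2=1)=\frac{m(m-1)\rho(\rho+\delta)}{1+\delta}.
\end{align*}
Equivalently, one can sum \eqref{beta-binom} against $k(k-1)$, rewrite the Gamma ratios as rising factorials (Definition~\ref{risingfactorial}), and collapse the sum with the Chu-Vandermonde identity (Lemma~\ref{chu-vandermonde}). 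Adding $E(S)=m\rho$ then yields
\begin{align*}
E(S^2)=\frac{m\rho\big[m(\rho+\delta)+1-\rho\big]}{1+\delta},
\end{align*}
which is exactly the bracketed term in the statement.

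\textbf{Cross term.} Condition on $S$. By exchangeability of $(Z_i,\ldots,Z_n)$, which is itself a P\'olya draw sequence of length $m+1$, the conditional law of $Z_i$ given $S=s$ equals the law of the last draw given that the first $m$ draws contain $s$ reds. From the urn dynamics this gives
\begin{align*}
E(Z_i\mid S=s)=\frac{a+s}{a+b+m}=\frac{\rho/\delta+s}{1/\delta+m}.
\end{align*}
Consequently,
\begin{align*}
E(Z_iS)=\frac{(\rho/\delta)E(S)+E(S^2)}{1/\delta+m}=\frac{E(S^2)}{1/\delta+m}+\frac{m\rho^2}{1+m\delta}.
\end{align*}
Substituting into the expression for $E(\deg(V_i)^2)$ produces the factor $\big(1+\frac{2i}{1/\delta+n-i}\big)E(S^2)$ together with the term $\frac{2i\rho^2(n-i)}{1+n\delta-i\delta}$, plus $i^2\rho$. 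Subtracting $(n\rho)^2$ then gives the claimed formula.

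\textbf{Main obstacle.} The delicate step is justifying the conditional expectation $E(Z_i\mid S)$ via exchangeability, which requires moving $Z_i$ to the last position among $Z_i,\ldots,Z_n$. I would verify it directly from \eqref{jointprobofPolya_product} by taking the ratio of the probability of a sequence with $s+1$ reds in $m+1$ draws to that with $s$ reds in $m$ draws. As a cross-check, the result should agree with the direct value $E(Z_iS)=m\rho(\rho+\delta)/(1+\delta)$.
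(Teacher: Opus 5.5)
Your proposal is correct and follows essentially the same route as the paper. The paper's expansion of $(x+i)^2$ and regrouping of the sums is exactly your decomposition $E(\deg(V_i)^2)=E(S^2)+i^2\rho+2iE(Z_iS)$, and its step $\Gamma(\frac{\rho}{\delta}+x+1)=(\frac{\rho}{\delta}+x)\Gamma(\frac{\rho}{\delta}+x)$ in the cross term is precisely your identity $E(Z_i\mid S=s)=\frac{\rho/\delta+s}{1/\delta+n-i}$. Your version is somewhat cleaner: you get $i^2\rho$ directly from $Z_i^2=Z_i$ rather than via Chu--Vandermonde, and you derive the Beta--Binomial second moment from $P(Z_1=Z_2=1)$ rather than citing it.
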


\begin{proof}[Proof] We use
    \begin{align}
        \Var(\deg(V_i)) &= E[\deg(V_i)^2] - E[\deg(V_i)]^2,
    \end{align}
where the second moment $E(\deg(V_i)^2)$ can be expressed as
\begin{align*}
    E(\deg(V_i)^2) &= \sum_{k=0}^{n-i}k^2\binom{n-i}{k}\Biggl[\frac{\Gamma\big(\frac{\rho}{\delta}+k\big)\Gamma\big(\frac{1-\rho}{\delta}+n-k-i+1\big)\Gamma\big(\frac{1}{\delta}\big)}{\Gamma\big(\frac{\rho}{\delta}\big)\Gamma\big(\frac{1-\rho}{\delta}\big)\Gamma\big(\frac{1}{\delta}+n-i+1\big)}\Biggr] \\*
    & \qquad +\sum_{k=i}^{n}k^2\binom{n-i}{k-i}\Biggl[\frac{\Gamma\big(\frac{\rho}{\delta}+k-i+1\big)\Gamma\big(\frac{1-\rho}{\delta}+n-k\big)\Gamma\big(\frac{1}{\delta}\big)}{\Gamma\big(\frac{\rho}{\delta}\big)\Gamma\big(\frac{1-\rho}{\delta}\big)\Gamma\big(\frac{1}{\delta}+n-i+1\big)}\Biggr].
\end{align*}
Using the change of variables $x=k-i$ in the second sum, we obtain
\begin{align*}
    E(\deg(V_i)^2) &= \sum_{k=0}^{n-i}k^2\binom{n-i}{k}\Biggl[\frac{\Gamma\big(\frac{\rho}{\delta}+k\big)\Gamma\big(\frac{1-\rho}{\delta}+n-k-i+1\big)\Gamma\big(\frac{1}{\delta}\big)}{\Gamma\big(\frac{\rho}{\delta}\big)\Gamma\big(\frac{1-\rho}{\delta}\big)\Gamma\big(\frac{1}{\delta}+n-i+1\big)}\Biggr] \\*
    & \qquad +\sum_{x=0}^{n-i}(x+i)^2\binom{n-i}{x}\Biggl[\frac{\Gamma\big(\frac{\rho}{\delta}+x+1\big)\Gamma\big(\frac{1-\rho}{\delta}+n-x-i\big)\Gamma\big(\frac{1}{\delta}\big)}{\Gamma\big(\frac{\rho}{\delta}\big)\Gamma\big(\frac{1-\rho}{\delta}\big)\Gamma(\frac{1}{\delta}+n-i+1)}\Biggr].
\end{align*}
Expanding $(x+i)^2=x^2+2ix+i^2$ yields
\begin{align*}
    E(\deg(V_i)^2) &= \sum_{k=0}^{n-i}k^2\binom{n-i}{k}\Biggl[\frac{\Gamma\big(\frac{\rho}{\delta}+k\big)\Gamma\big(\frac{1-\rho}{\delta}+n-k-i+1\big)\Gamma\big(\frac{1}{\delta}\big)}{\Gamma\big(\frac{\rho}{\delta}\big)\Gamma\big(\frac{1-\rho}{\delta}\big)\Gamma\big(\frac{1}{\delta}+n-i+1\big)}\Biggr] \\*
    & \qquad +\sum_{x=0}^{n-i}x^2\binom{n-i}{x}\Biggl[\frac{\Gamma\big(\frac{\rho}{\delta}+x+1\big)\Gamma\big(\frac{1-\rho}{\delta}+n-x-i\big)\Gamma\big(\frac{1}{\delta}\big)}{\Gamma\big(\frac{\rho}{\delta}\big)\Gamma\big(\frac{1-\rho}{\delta}\big)\Gamma(\frac{1}{\delta}+n-i+1)}\Biggr]\\*
    & \qquad +\sum_{x=0}^{n-i}i^2\binom{n-i}{x}\Biggl[\frac{\Gamma\big(\frac{\rho}{\delta}+x+1\big)\Gamma\big(\frac{1-\rho}{\delta}+n-x-i\big)\Gamma\big(\frac{1}{\delta}\big)}{\Gamma\big(\frac{\rho}{\delta}\big)\Gamma\big(\frac{1-\rho}{\delta}\big)\Gamma(\frac{1}{\delta}+n-i+1)}\Biggr]\\*
    & \qquad +\sum_{x=0}^{n-i}2ix\binom{n-i}{x}\Biggl[\frac{\Gamma\big(\frac{\rho}{\delta}+x+1\big)\Gamma\big(\frac{1-\rho}{\delta}+n-x-i\big)\Gamma\big(\frac{1}{\delta}\big)}{\Gamma\big(\frac{\rho}{\delta}\big)\Gamma\big(\frac{1-\rho}{\delta}\big)\Gamma(\frac{1}{\delta}+n-i+1)}\Biggr].
\end{align*}
We now group the second terms with the first sum. Recall from~\eqref{beta-binom} that the Beta-Binomial random variable $X$ with parameters $N$, $\alpha$ and $\beta$ has distribution
\begin{align*}
    P(X=k) = \binom{N}{k}\frac{\Gamma(k+\alpha)\Gamma(N-k+\beta)\Gamma(\alpha+\beta)}{\Gamma(\alpha)\Gamma(\beta)\Gamma(N+\alpha+\beta)}, \qquad k \in \{0, 1, \dots, N\},
\end{align*}
and its second moment is given by (see~\autocite{johnson2005univariate}) 
\[E[X^2] = \frac{N\alpha[N(1+\alpha)+\beta]}{(\alpha+\beta)(1+\alpha+\beta)}.
\]
Note that the combined sum coincides with the second moment of a Beta-Binomial random variable with parameters $n-i$, $\frac{\rho}{\delta}$ and $\frac{1-\rho}{\delta}$, which gives
\begin{align*}
    \sum_{k=0}^{n-i}k^2&\binom{n-i}{k}\Biggl[\frac{\Gamma\big(\frac{\rho}{\delta}+k\big)\Gamma\big(\frac{1-\rho}{\delta}+n-k-i\big)\Gamma\big(\frac{1}{\delta}\big)}{\Gamma\big(\frac{\rho}{\delta}\big)\Gamma\big(\frac{1-\rho}{\delta}\big)\Gamma\big(\frac{1}{\delta}+n-i\big)}\Biggr]\\*
    &=\frac{(n-i)\frac{\rho}{\delta}\bigl[(n-i)(1+\frac{\rho}{\delta})+\frac{1-\rho}{\delta}\bigr]}{\frac{1}{\delta}\bigl(\frac{1}{\delta}+1\bigr)}\\*
    &=\frac{(n-i)\rho\bigl[(n-i)(\delta+\rho)+1-\rho\bigr]}{1+\delta}.
\end{align*}

For the $i^2$-term, we obtain
\begin{align}
    \sum_{x=0}^{n-i}i^2\binom{n-i}{x}&\Biggl[\frac{\Gamma\big(\frac{\rho}{\delta}+x+1\big)\Gamma\big(\frac{1-\rho}{\delta}+n-x-i\big)\Gamma\big(\frac{1}{\delta}\big)}{\Gamma\big(\frac{\rho}{\delta}\big)\Gamma\big(\frac{1-\rho}{\delta}\big)\Gamma\big(\frac{1}{\delta}+n-i+1\big)}\Biggr]\nonumber \\*
    &=i^2 \bigg[\sum_{x=0}^{n-i}\binom{n-i}{x}\frac{(\delta)\Gamma(\frac{1}{\delta}+1)}{(\frac{\delta}{\rho})\Gamma(\frac{\rho}{\delta}+1)\Gamma(\frac{1-\rho}{\delta})}\frac{\Gamma(\frac{\rho}{\delta}+x+1)\Gamma(\frac{1-\rho}{\delta}+n-x-i)}{\Gamma(\frac{1}{\delta}+n-i+1)} \bigg]\nonumber\\*
    &= i^2\rho\bigg(\frac{\Gamma(\frac{1}{\delta}+1)}{\Gamma(\frac{\rho}{\delta}+1)\Gamma(\frac{1-\rho}{\delta})}\bigg)\nonumber\\*
    &\qquad\bigg[\sum_{x=0}^{n-i}\binom{n-i}{x}\frac{\Gamma(\frac{\rho}{\delta}+1)\Gamma(\frac{1-\rho}{\delta})\prod_{s=0}^{x-1}(\frac{\rho}{\delta}+1+s)\prod_{j=0}^{n-x-i-1}(\frac{1-\rho}{\delta}+j)}{\Gamma(\frac{1}{\delta}+1)\prod_{k=0}^{n-i-1}(\frac{1}{\delta}+1+k)}\bigg]\nonumber\\*
    &= i^2\rho\bigg[\frac{\sum_{x=0}^{n-i}\binom{n-i}{x}\prod_{s=0}^x(\frac{\rho}{\delta}+1+s)\prod_{j=0}^{n-x-i-1}(\frac{1-\rho}{\delta}+j)}{\prod_{k=0}^{n-i}(\frac{1}{\delta}+1+k)}\bigg]\nonumber\\
    &=  i^2\rho\bigg[\frac{\sum_{x=0}^{n-i}\binom{n-i}{x}(\frac{\rho}{\delta}+1)^{(x)}(\frac{1-\rho}{\delta})^{(n-x-i)}}{(\frac{1}{\delta}+1)^{(n-i)}}\bigg]\label{Use rising factorial}\\
    &= i^2\rho \bigg[\frac{(\frac{\rho}{\delta}+\frac{1-\rho}{\delta}+1)^{(n-i)}}{(\frac{1}{\delta}+1)^{(n-i)}}\bigg]\label{equation:useLemma5}\\*
    &=i^2\rho,\nonumber
\end{align}
where \eqref{Use rising factorial} follows from the definition of \textit{rising factorial} and \eqref{equation:useLemma5} is obtained by using Lemma~\ref{chu-vandermonde}.

Therefore,
\begin{align}\label{equation:second moment}
    E(\deg(V_i)^2) &=\frac{(n-i)\rho\bigl[(n-i)(\delta+\rho)+1-\rho\bigr]}{1+\delta}+i^2\rho \nonumber\\*
    & \quad +\sum_{x=0}^{n-i}2ix\binom{n-i}{x}\Bigl[\frac{\Gamma\big(\frac{\rho}{\delta}+x+1\big)\Gamma\big(\frac{1-\rho}{\delta}+n-x-i\big)\Gamma\big(\frac{1}{\delta}\big)}{\Gamma\big(\frac{\rho}{\delta}\big)\Gamma\big(\frac{1-\rho}{\delta}\big)\Gamma\big(\frac{1}{\delta}+n-i+1\big)}\Bigr].
\end{align}

The last term in \eqref{equation:second moment} can be simplified as follows
\begin{align}\label{equation:xandx^2}
\sum_{x=0}^{n-i}2ix&\binom{n-i}{x}\biggl[\frac{\Gamma\big(\frac{\rho}{\delta}+x+1\big)\Gamma\big(\frac{1-\rho}{\delta}+n-x-i\big)\Gamma\big(\frac{1}{\delta}\big)}{\Gamma\big(\frac{\rho}{\delta}\big)\Gamma\big(\frac{1-\rho}{\delta}\big)\Gamma\big(\frac{1}{\delta}+n-i+1\big)}\Biggr] \nonumber\\
&=\sum_{x=0}^{n-i}2ix\frac{(\frac{\rho}{\delta}+x)}{(\frac{1}{\delta}+n-i)}\binom{n-i}{x}\Biggl[\frac{\Gamma\big(\frac{\rho}{\delta}+x\big)\Gamma\big(\frac{1-\rho}{\delta}+n-x-i\big)\Gamma\big(\frac{1}{\delta}\big)}{\Gamma\big(\frac{\rho}{\delta}\big)\Gamma\big(\frac{1-\rho}{\delta}\big)\Gamma\big(\frac{1}{\delta}+n-i\big)}\Biggr]\nonumber\\
&=  \sum_{x=0}^{n-i}\frac{2i\frac{\rho}{\delta}x}{(\frac{1}{\delta}+n-i)}\binom{n-i}{x}\Biggl[\frac{\Gamma\big(\frac{\rho}{\delta}+x\big)\Gamma\big(\frac{1-\rho}{\delta}+n-x-i\big)\Gamma\big(\frac{1}{\delta}\big)}{\Gamma\big(\frac{\rho}{\delta}\big)\Gamma\big(\frac{1-\rho}{\delta}\big)\Gamma\big(\frac{1}{\delta}+n-i\big)}\Biggr] \nonumber\\*
    &\qquad +\sum_{x=0}^{n-i}\frac{2ix^2}{(\frac{1}{\delta}+n-i)}\binom{n-i}{x}\Biggl[\frac{\Gamma\big(\frac{\rho}{\delta}+x\big)\Gamma\big(\frac{1-\rho}{\delta}+n-x-i\big)\Gamma\big(\frac{1}{\delta}\big)}{\Gamma\big(\frac{\rho}{\delta}\big)\Gamma\big(\frac{1-\rho}{\delta}\big)\Gamma\big(\frac{1}{\delta}+n-i\big)}\Biggr].
\end{align}
By the first and second moments of the Beta-Binomial distribution with parameters $n-i$, $\frac{\rho}{\delta}$ and $\frac{1-\rho}{\delta}$, we obtain
\begin{align*}
\sum_{x=0}^{n-i}\frac{2i\frac{\rho}{\delta}x}{(\frac{1}{\delta}+n-i)}\binom{n-i}{x}&\Biggl[\frac{\Gamma\big(\frac{\rho}{\delta}+x\big)\Gamma\big(\frac{1-\rho}{\delta}+n-x-i\big)\Gamma\big(\frac{1}{\delta}\big)}{\Gamma\big(\frac{\rho}{\delta}\big)\Gamma\big(\frac{1-\rho}{\delta}\big)\Gamma\big(\frac{1}{\delta}+n-i\big)}\Biggr]=
\frac{2i\frac{\rho}{\delta}}{(\frac{1}{\delta}+n-i)}\Big(\frac{(n-i)\frac{\rho}{\delta}}{\frac{1}{\delta}}\Big),\\
\sum_{x=0}^{n-i}\frac{2ix^2}{(\frac{1}{\delta}+n-i)}\binom{n-i}{x}&\Biggl[\frac{\Gamma\big(\frac{\rho}{\delta}+x\big)\Gamma\big(\frac{1-\rho}{\delta}+n-x-i\big)\Gamma\big(\frac{1}{\delta}\big)}{\Gamma\big(\frac{\rho}{\delta}\big)\Gamma\big(\frac{1-\rho}{\delta}\big)\Gamma\big(\frac{1}{\delta}+n-i\big)}\Biggr] = \\
&\frac{2i}{(\frac{1}{\delta}+n-i)}\Bigg[\frac{(n-i)\frac{\rho}{\delta}\big[(n-i)(1+\frac{\rho}{\delta})+\frac{1-\rho}{\delta}\big]}{\frac{1}{\delta}\big(1+\frac{1}{\delta}\big)}\Bigg].
\end{align*}
Replacing those into \eqref{equation:xandx^2}, we obtain
\begin{align*}
     E(\deg(V_i)^2) &=\frac{(n-i)\rho\bigl[(n-i)(\delta+\rho)+1-\rho\bigr]}{1+\delta}+i^2\rho \\*
     &\quad+\frac{2i\frac{\rho}{\delta}}{(\frac{1}{\delta}+n-i)}\Big(\frac{(n-i)\frac{\rho}{\delta}}{\frac{1}{\delta}}\Big) + \frac{2i}{(\frac{1}{\delta}+n-i)}\Bigg[\frac{(n-i)\frac{\rho}{\delta}\big[(n-i)(1+\frac{\rho}{\delta})+\frac{1-\rho}{\delta}\big]}{\frac{1}{\delta}\big(1+\frac{1}{\delta}\big)}\Bigg]\\
    &= i^2\rho +\frac{\rho(n-i)\big[(n-i)(\delta+\rho)+1-\rho\big]}{1+\delta}\\*
    & \qquad + \frac{2i\frac{\rho^2}{\delta}(n-i)}{(\frac{1}{\delta}+n-i)} + \frac{2i}{(\frac{1}{\delta}+n-i)}\frac{(n-i)\rho\big[(n-i)(\delta+\rho)+1-\rho\big]}{1+\delta}\\
    &= \Bigg(1+\frac{2i}{\frac{1}{\delta}+n-i}\Bigg)\Bigg[\frac{(n-i)\rho\big[(n-i)(\delta+\rho)+1-\rho\big]}{1+\delta}\Bigg]+i^2\rho + \frac{2i\rho^2(n-i)}{1+n\delta-i\delta},
\end{align*}
as desired.
\end{proof}

While the degree distribution precisely characterizes a node's local connectivity, it does not fully describe its global position within the network. To capture a node's influence beyond local degree statistics, we consider the decay centrality score, see~\autocite{tsakas2017decaycentrality}. Widely used to analyze network efficiency, this metric evaluates network influence by discounting low-degree nodes. We next evaluate the centrality score of the P\'olya threshold graph.

\begin{definition} [\autocite{tsakas2017decaycentrality}]
    The centrality score of an arbitrary node $V_i$ in a graph of size $n$ is defined as
    \begin{align}
        C_{V_i} = \sum_{j=1}^n\alpha^{d(V_i,V_j)},
    \end{align}
    where $\alpha \in(0,1)$.
\end{definition}
In this paper, we adopt the standard choice of $\alpha = \frac{1}{2}$.

Next, we denote $d(V_i, V_j)$ as the shortest distance between nodes $V_i$ and $V_j$. We adopt the convention that a node without a self-loop is treated as disconnected from itself, i.e., $d(V_i,V_i)=\infty$ when $Z_i=0$, see for instance,~\autocite{BlochJacksonTebaldi2023Centrality}. 

For the P\'olya threshold graph, the distance between nodes $V_i$ and $V_j$ is given by
\begin{align}\label{equation:distance}
    d(V_i,V_j) = \begin{cases}
            0 & \text{ if $Z_i = 1, i = j$, }\\
            \infty & \text{ if $Z_i = 0, i=j$,}\\
            1 & \text{ if $Z_{ \max(i,j)} = 1, i\neq j$,}\\
            2 & \text{ if $Z_{ \max(i,j)}=0$ and $\sum^n_{k= \max(i,j)+1}Z_k \geq 1, i\neq j$,}\\
            \infty & \text{if $\sum^n_{k= \max(i,j)}Z_k = 0, i\neq j$.} 
    \end{cases}
\end{align}

\begin{lemma}\label{lemma:prob of distance}
    The probability distribution of the distance $d(V_i,V_j)$ in \eqref{equation:distance} is given by
    \begin{align*}
        &P(d(V_i, V_i)= 0)= \rho,\\*
        &P(d(V_i, V_i) =\infty)  = 1-\rho,\\*
        &P(d(V_i,V_j) = 1 ) = \rho,  \quad i\neq j,\\
        &P(d(V_i,V_j) = 2) =  1-\Big[ \rho+\Big(\prod_{s=0}^{n-{\max(i,j)}}\frac{1-\rho+s\delta}{1+s\delta}\Big)\Big],\quad i\neq j,\\*
        &P(d(V_i,V_j) = \infty) = \prod_{s=0}^{n-{\max(i,j)}}\frac{1-\rho+s\delta}{1+s\delta},\quad i\neq j.
    \end{align*}
\end{lemma}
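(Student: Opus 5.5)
The plan is to read each probability directly off the case description of $d(V_i,V_j)$ in \eqref{equation:distance}, using only the marginal law of a single draw and the probability of a run of black draws. Throughout, let $m=\max(i,j)$.

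\textbf{The cases $i=j$ and $d=1$.} When $i=j$, the distance is $0$ exactly when $Z_i=1$ and is $\infty$ exactly when $Z_i=0$. By exchangeability, as in the proof of Lemma~\ref{theorem:expectation}, $P(Z_i=1)=\rho$, so the first two identities follow. For $i\neq j$, the event $\{d(V_i,V_j)=1\}$ equals $\{Z_m=1\}$, which again has probability $\rho$.

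\textbf{The case $d=\infty$ for $i\neq j$.} This event is $\{Z_m=Z_{m+1}=\cdots=Z_n=0\}$, consisting of $n-m+1$ consecutive black draws. By exchangeability its probability equals $P(Z_1=\cdots=Z_{n-m+1}=0)$. Apply \eqref{jointprobofPolya_product} with $k=0$ and $n-m+1$ draws:
\[
\prod_{s=0}^{n-m}\frac{B+s\Delta}{R+B+s\Delta}.
\]
Dividing numerator and denominator by $T=R+B$ turns each factor into $\frac{1-\rho+s\delta}{1+s\delta}$. This is the stated product.

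\textbf{The case $d=2$ for $i\neq j$.} The events $\{d=1\}$, $\{d=2\}$ and $\{d=\infty\}$ are disjoint and exhaust the possibilities. Indeed, if $Z_m=0$, then either some later draw equals $1$, giving $d=2$, or all later draws are $0$, giving $d=\infty$. Hence
\[
P(d=2)=1-\rho-P(d=\infty),
\]
which is the stated formula.

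\textbf{Main point to check.} No step is genuinely hard; the only delicate point is the range of the product index. The run $Z_m,\ldots,Z_n$ has length $n-m+1$, so the product should run over $s=0,\ldots,n-m$, and this matches the statement.
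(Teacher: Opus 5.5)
Your proposal is correct and takes essentially the same route as the paper. You read each event off the case description \eqref{equation:distance}, use $P(Z_m=1)=\rho$ by exchangeability, evaluate the all-black run $Z_m,\ldots,Z_n$ via exchangeability and the P\'olya joint law with $k=0$, and obtain $P(d(V_i,V_j)=2)$ by subtraction. Your explicit reduction to $P(Z_1=\cdots=Z_{n-m+1}=0)$ and your check of the index range $s=0,\ldots,n-m$ simply spell out what the paper does implicitly.
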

\begin{proof}[Proof]
    First, for $i=j$, $d(V_i,V_i) = 0$ if and only if $Z_i=1$ (a self--loop is present), and $d(V_i,V_i) = \infty$ otherwise, which gives
    \begin{align*}
        &P(d(V_i, V_i)= 0)= P(Z_i = 1) = \rho,\\*
        &P(d(V_i, V_i) =\infty) =P(Z_i = 0) = 1-\rho.
    \end{align*}

    Without loss of generality, let $i<j$. If $Z_j=1$, then $V_j$ is a universal node and directly connects to $V_i$, so $d(V_i,V_j)=1$, where
    \[
    P(d(V_i,V_j)=1) = P(Z_j = 1) = \rho.
    \]
    If $Z_j = 0$, then $V_j$ is isolated at time $j$. If some later node is universal, it connects to both $V_i$ and $V_j$, giving distance $2$; if no later node is universal, there is no path connecting $V_i$ and $V_j$, and the distance is $\infty$. The stated probabilities follow from exchangeability and the P\'olya joint law of the draw random variables:
    \begin{align*}
        P(d(V_i,V_j) = 2) &= P\bigg(Z_{j}=0,\sum_{k={j+1}}^nZ_k \ge 1\bigg)\\
        &= P(Z_{j}=0) - P\bigg(Z_{j}=0,\sum_{k={j+1}}^nZ_k =0\bigg)\\
        &= (1-\rho)-\Bigg[ \prod_{s=0}^{n-{j}}\frac{1-\rho+s\delta}{1+s\delta}\Bigg].
    \end{align*}
    Finally, 
    \begin{align*}
        P(d(V_i, V_j) &= \infty) = P\bigg(\sum_{k=j}^{n}Z_k = 0\bigg) \\*
        &=(1-\rho)\frac{1-\rho+\delta}{1+\delta}\cdots\frac{1-\rho+(n-j)\delta}{1+(n-j)\delta}\\*
        &= \prod_{s=0}^{n-j}\frac{1-\rho+s\delta}{1+s\delta},
    \end{align*}
    as claimed.
\end{proof}

Using these distance probabilities, we now derive the expected centrality score.

\begin{theorem}\label{theorem:centralityscore}
    For each $i\in\{1,\ldots,n\}$,
    \begin{align}\label{equation:centralityscore}
        E(C_{V_i})&= \sum_{j=1, j\neq i}^n\Bigg[\frac{\rho}{4}+\frac{1}{4}\bigg(1-\prod_{s=0}^{n-{\max(i,j)}}\frac{1-\rho+s\delta}{1+s\delta}\bigg)\Bigg]+\rho.
    \end{align}
\end{theorem}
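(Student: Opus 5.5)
We use linearity of expectation on the definition of $C_{V_i}$ with $\alpha=\tfrac12$:
\[
E(C_{V_i})=E\big(\alpha^{d(V_i,V_i)}\big)+\sum_{j=1,\,j\neq i}^n E\big(\alpha^{d(V_i,V_j)}\big).
\]
Each term only involves the marginal law of a single distance, and these laws are given by Lemma~\ref{lemma:prob of distance}. We adopt the convention $\alpha^{\infty}=0$ for $\alpha\in(0,1)$, consistent with treating unreachable nodes (including a node without a self-loop, relative to itself) as contributing nothing.

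For the diagonal term, $d(V_i,V_i)$ takes only the values $0$ and $\infty$. Lemma~\ref{lemma:prob of distance} gives $P(d(V_i,V_i)=0)=\rho$, so
\[
E\big(\alpha^{d(V_i,V_i)}\big)=\alpha^0\rho+0\cdot(1-\rho)=\rho.
\]
This is the isolated $+\rho$ in \eqref{equation:centralityscore}.

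For $j\neq i$, the distance takes values in $\{1,2,\infty\}$, and Lemma~\ref{lemma:prob of distance} supplies the three probabilities. Writing $\Pi_{ij}=\prod_{s=0}^{n-\max(i,j)}\frac{1-\rho+s\delta}{1+s\delta}$, we get
\[
E\big(\alpha^{d(V_i,V_j)}\big)=\tfrac12\rho+\tfrac14\big(1-\rho-\Pi_{ij}\big).
\]
Summing over $j\neq i$ and adding $\rho$ gives the claimed expression after rearranging the coefficients of $\rho$.

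The only delicate point is the bookkeeping of the $\rho$ coefficient. The expression $\tfrac12\rho+\tfrac14(1-\rho-\Pi_{ij})$ simplifies to $\tfrac{\rho}{4}+\tfrac14(1-\Pi_{ij})$, which matches the summand in \eqref{equation:centralityscore} exactly. So the whole argument is a direct substitution once the convention $\alpha^\infty=0$ is fixed.
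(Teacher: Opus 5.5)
Your proposal is correct and follows essentially the same route as the paper. Both use linearity of expectation to split off the diagonal term, substitute the distance probabilities from Lemma~\ref{lemma:prob of distance}, and treat $\alpha^{\infty}$ as $0$ (the paper writes this as $\lim_{s\to\infty}2^{-s}$); your simplification $\tfrac12\rho+\tfrac14(1-\rho-\Pi_{ij})=\tfrac{\rho}{4}+\tfrac14(1-\Pi_{ij})$ is exactly the paper's final step.
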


\begin{proof}
Note that
    \begin{align}\label{centrality_proof_equation_1}
        E(C_{V_i}) &= E\bigg(\sum_{j=1}^n \frac{1}{2^{d(V_i,V_j)}}\bigg)\nonumber\\
        & = E\bigg(\sum_{j=1, j\neq i}^n \frac{1}{2^{d(V_i,V_j)}}\bigg)+E\bigg(\frac{1}{2^{d(V_i,V_i)}}\bigg).
    \end{align}
Since $d(V_i,V_i)=0$ when $Z_i=1$ and $d(V_i,V_i)=\infty$ when $Z_i=0$,
    \begin{align*}
        E\bigg(\frac{1}{2^{d(V_i,V_i)}}\bigg) = \bigg[\frac{1}{2^0}P(Z_i = 1)+\lim_{s\rightarrow \infty}\frac{1}{2^s}P(Z_i = 0)\bigg] = \rho.
    \end{align*}
    And the first term in \eqref{centrality_proof_equation_1} can be shown as
    \begin{align*}
        E\bigg(\sum_{j=1, j\neq i}^n \frac{1}{2^{d(V_i,V_j)}}\bigg)&= \sum_{j=1, j\neq i}^n\bigg[\frac{1}{2^1}P(Z_{ \max(i,j)}=1)+\lim_{s\rightarrow \infty}\frac{1}{2^s}P\bigg(\sum_{k= \max(i,j)}^nZ_k = 0\bigg)\\*
        & \qquad+\frac{1}{2^2}P\bigg(Z_{ \max(i,j)} = 0, \sum_{k=\max(i,j)+1}^nZ_k\geq1\bigg)\Bigg]\\
        &= \sum_{j=1, j\neq i}^n\Bigg[\frac{\rho}{2}+0+\frac{1}{4}\bigg(1-\rho- P\bigg(\sum_{k= \max(i,j)}^nZ_k = 0\bigg)\bigg)\Bigg]\\
        &= \sum_{j=1, j\neq i}^n\Bigg[\frac{\rho}{2}+\frac{1}{4}\bigg(1-\rho-\prod_{s=0}^{n-{\max(i,j)}}\frac{1-\rho+s\delta}{1+s\delta}\bigg)\Bigg]\\
        &= \sum_{j=1, j\neq i}^n\Bigg[\frac{\rho}{4}+\frac{1}{4}\bigg(1-\prod_{s=0}^{n-{\max(i,j)}}\frac{1-\rho+s\delta}{1+s\delta}\bigg)\Bigg],
    \end{align*}
and substituting the probabilities in Lemma~\ref{lemma:prob of distance} above yields \eqref{equation:centralityscore}.
\end{proof}

Having established the stochastic properties of the P\'olya threshold graph, including the degree distribution and centrality scores, we then conclude this section by characterizing some of its algebraic properties. As a fundamental algebraic representation of a network, the Laplacian matrix captures the deep structural connectivity of the network, and provides the essential mathematical foundation for studying linear processes over the graph~\autocite{CG-RG:01}. To this end, we first formally define the degree and Laplacian matrices of the P\'olya threshold graph.
\begin{definition}[\textbf{Degree and Laplacian matrices}]
    The degree matrix $D_n$ of $\G_n$ is the random variable given by the diagonal matrix whose $i$-th diagonal entry is $\deg(V_i)$:
    \[
        D_n := \mathrm{diag}\big(\deg(V_1),\ldots,\deg(V_n)\big).
    \]
    The Laplacian matrix of the P\'olya threshold graph is the random variable defined by
    \begin{align}\label{eq:Ln-explicit}
        L_n := D_n - A_n.
    \end{align}
    Hence, in light of \eqref{eq:polyathreshold-adjacency} and \eqref{equation:degreeofVi}, the entries of $L_n$ satisfy
    \begin{equation}\label{eq:Ln-entries}
        (L_n)_{ij}
        =
        \begin{cases}
            \deg(V_i)-Z_i,  & \text{if } i = j,\\[0.2em]
            -Z_{\max(i,j)}, & \text{if } i\neq j,
        \end{cases}
    \end{equation}
    for $i,j \in \{1,\ldots, n\}$.
\end{definition}

Next, we describe the spectrum of $L_n$.  
While the Laplacian spectrum of a deterministic threshold graph is known to be determined by its degree sequence~\autocite{merris1994degree, GroneMerris1994, Helmberg2015}, the eigenvalues are not, in general, equal to the degrees. By contrast, for our P\'olya threshold graph, the nonzero Laplacian eigenvalues coincide with the degrees of the corresponding nodes, beginning with the second node. We now establish this result and derive a corresponding eigenbasis.
Let $\{e_i\}_{i=1}^n$ denote the standard basis of $\mathbb{R}^n$.

\begin{theorem}\label{theorem:eigenvalue}
    Let $L_n$ be the Laplacian matrix of a P\'olya threshold graph with $n$ nodes. Then the multiset of eigenvalues of $L_n$, $\Lambda(L_n)$, is given by 
    \begin{align}
        \Lambda(L_n) = \{0,\deg(V_2),\deg(V_3),\ldots,\deg(V_n)\}
        = \Big\{0,\; 2Z_2+\!\!\sum_{j=3}^n Z_j,\; \ldots,\; nZ_n\Big\}.
    \end{align}
    A corresponding eigenbasis of $\mathbb{R}^n$ is given by
    \begin{align}
        u_1 &= (1,1,\ldots,1)^\top \in \mathbb{R}^n,\\
        u_m &= \sum_{i=1}^{m-1} e_i - (m-1)e_m
             = (\,\underbrace{1,\ldots,1}_{m-1},\,-(m-1),\,\underbrace{0,\ldots,0}_{n-m}\,)^\top,
    \end{align}
    where $m=2,\ldots,n$.
\end{theorem}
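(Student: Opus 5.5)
The plan is a direct verification: show $L_n u_m = \deg(V_m)\,u_m$ for each $m\in\{2,\ldots,n\}$ and $L_n u_1 = 0$, then show that $\{u_1,\ldots,u_n\}$ is a basis of $\mathbb{R}^n$. Everything is deterministic once we condition on a realization $Z^n=z^n$, so the randomness plays no role. Throughout I use the entry description \eqref{eq:Ln-entries} and the degree formula \eqref{equation:degreeofVi}.

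\textbf{Step 1 (the zero eigenvalue).} Compute the $i$-th row sum of $L_n$. It equals
\[
\deg(V_i)-\sum_{j=1}^n Z_{\max(i,j)} = \deg(V_i)-\Big(iZ_i+\sum_{j>i}Z_j\Big)=0
\]
by \eqref{equation:degreeofVi}. Here the self-loop term is handled correctly because the diagonal entry is $\deg(V_i)-Z_i$. Hence $L_n u_1 = 0$.

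\textbf{Step 2 (the eigenvalue $\deg(V_m)$).} Fix $m\ge 2$ and write
\[
(L_n u_m)_i=\sum_{j<m}(L_n)_{ij}-(m-1)(L_n)_{im}.
\]
I split into three cases on the row index $i$.
\begin{itemize}
\item \emph{Case $i>m$.} Every entry $(L_n)_{ij}$ with $j\le m$ equals $-Z_i$. The expression is therefore $-(m-1)Z_i+(m-1)Z_i=0$, matching the zero entry of $u_m$.
\item \emph{Case $i=m$.} The off-diagonal entries are $-Z_m$, giving $-(m-1)Z_m-(m-1)(\deg(V_m)-Z_m)=-(m-1)\deg(V_m)$. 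This equals $\deg(V_m)$ times the $m$-th entry of $u_m$.
\item \emph{Case $i<m$.} Using Step 1, $\sum_{j<m}(L_n)_{ij}=-\sum_{j\ge m}(L_n)_{ij}=\sum_{j\ge m}Z_j$. Adding $(m-1)Z_m$ gives $mZ_m+\sum_{j>m}Z_j=\deg(V_m)$, which is $\deg(V_m)$ times the entry $1$ of $u_m$.
\end{itemize}
This case analysis is the only real content. The main point to get right is the diagonal convention with self-loops, and the use of the zero row sum in the case $i<m$, which avoids summing the diagonal term separately.

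\textbf{Step 3 (basis).} I will show the vectors are pairwise orthogonal and nonzero, hence linearly independent, hence a basis of $\mathbb{R}^n$.
\begin{itemize}
\item Each $u_m$ with $m\ge2$ has coordinate sum $(m-1)-(m-1)=0$, so $u_1\perp u_m$.
\item For $2\le k<m$, $u_k$ is supported on coordinates $1,\ldots,k\le m-1$, where $u_m\equiv 1$. So $\langle u_k,u_m\rangle$ is the coordinate sum of $u_k$, which is $0$.
\end{itemize}
Since the $n$ eigenvectors form a basis, the multiset of eigenvalues is exactly $\{0,\deg(V_2),\ldots,\deg(V_n)\}$, counted with multiplicity. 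The explicit expressions $\deg(V_m)=mZ_m+\sum_{j>m}Z_j$ follow from \eqref{equation:degreeofVi}.
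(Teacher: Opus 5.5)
Your proposal is correct and follows essentially the same route as the paper: a direct entrywise check of $L_n u_1 = 0$ and $L_n u_m = \deg(V_m)\,u_m$ over the same three cases $i<m$, $i=m$, $i>m$, followed by linear independence of the $u_m$. The differences are cosmetic. In the case $i<m$ you use the zero row sums of $L_n$ instead of substituting the degree formula \eqref{equation:degreeofVi} directly. For independence you give an explicit pairwise-orthogonality argument, where the paper only asserts that the vectors are independent by construction.
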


\begin{proof}
    First, observe that for every $i$, the sum of the entries in the $i$-th row of $A_n$ equals $\deg(V_i)$, and $a_{ii}=Z_i$.  
    Hence, the $i$-th entry of $L_n u_1$ is
    \[
        (L_n u_1)_i
        = \big(\deg(V_i)-Z_i\big) - \sum_{j\neq i} Z_{\max(i,j)}
        = 0,
    \]
    where we have used \eqref{equation:degreeofVi}. Thus, $u_1$ is an eigenvector with eigenvalue $0$.

    Fix $m\in\{2,\ldots,n\}$ and consider
    \[
        u_m = \sum_{i=1}^{m-1} e_i - (m-1)e_m.
    \]
    For each $i\in\{1,\dots,n\}$,
    \[
    (L_nu_m)_i=\sum_{j=1}^n (L_n)_{ij}(u_m)_j.
    \]
    The only nonzero coordinates of $u_m$ are $(u_m)_k=1$ for $1\le k\le m-1$ and $(u_m)_m=-(m-1)$, with $(u_m)_k=0$ for $k>m$.  
    Using the explicit form of $L_n$ in \eqref{eq:Ln-explicit}-\eqref{eq:Ln-entries}, we have
    \begin{align}\label{equation:Laplacian and eigenvector}
        (L_n u_m)_i =
        \begin{cases}
             (i-1)(-Z_i)+ (\deg(V_i)-Z_i)-\sum_{j=i+1}^{m-1}Z_j-(m-1)(-Z_m), & 1\le i\le m-1,\\[0.2em]
            (m-1)(-Z_m)-(m-1)\big[\deg(V_m)-Z_m\big], & i=m,\\[0.2em]
            (m-1)(-Z_i)-(m-1)(-Z_i), & i>m.
        \end{cases}
    \end{align}
    Substituting \eqref{equation:degreeofVi} into \eqref{equation:Laplacian and eigenvector} yields 
    \[
        (L_n u_m)_i =
        \begin{cases}
            \deg(V_m), & 1\le i\le m-1,\\[0.2em]
            -(m-1)\deg(V_m), & i=m,\\[0.2em]
            0, & i>m,
        \end{cases}
    \]
    that is,
    \[
        L_n u_m = \deg(V_m)\,u_m.
    \]
    Thus, $u_m$ is an eigenvector with eigenvalue $\deg(V_m)$.  
    The vectors $u_1,\ldots,u_n$ are linearly independent, by construction, and therefore they form a basis of $\mathbb{R}^n$ consisting of eigenvectors of $L_n$, yielding the claim.
\end{proof}

\begin{remark}
    An interesting and distinct feature of the Laplacian spectrum of the P\'olya threshold graph established in Theorem~\ref{theorem:eigenvalue} is that its eigenvalues are random variables, whereas its corresponding eigenvector basis is entirely deterministic. This is because the graph is constructed recursively by adding either isolated or universal nodes, with node types determined by the P\'olya urn draws. As a result, the randomness affects only the eigenvalues, while the deterministic structure of threshold graphs fixes the Laplacian eigenvector basis, in contrast to classical random graph models where both eigenvectors and eigenvalues are random.

    Furthermore, since Theorem~\ref{theorem:eigenvalue} expresses the Laplacian eigenvalues in terms of the node degrees, the probability distribution of the individual Laplacian eigenvalues is fully characterized by the node degree distribution derived in Theorem~\ref{theorem:degreeprob}. This high tractability provides a solid foundation for studying various spectral notions of the network, such as its algebraic connectivity (the second smallest Laplacian eigenvalue, given by $\min_{2\leq i\leq n}\deg(V_i)$), its largest eigenvalue, and empirical spectral distribution.
\end{remark}

\section{Application to consensus}\label{section:consensus}
As an application, following \autocite{degroot1974reaching, olfatisaber2007consensus, shi2015consensus_random_graph_processes}, we analyze the consensus limit for the linear averaging dynamics induced by the P\'olya threshold graph. Consensus dynamics of random graphs is a well studied subject~\autocite{touri2013productrandomstochasticmatrices}, and our objective here is merely to relate the consensus limit to the stochastic properties of the P\'olya threshold graph.
Throughout this section, we consider a \textit{connected} size-$n$ P\'olya threshold graph obtained by fixing $Z_n$ to $1$ (equivalently, by enforcing the last added node to be universal). Hence, $\G_n$ is connected for every $n$. If this condition is not guaranteed, the graph may decompose into a connected component and multiple isolated nodes, and the results apply to the connected component.
\subsection{Consensus limit}\label{section:consensuslimit}
For each $i\in\{1,\ldots,n\}$, we assign an initial opinion $x_i(0)\in\mathbb{R}$ to the node $V_i\in\V_n$. At each discrete time $t\ge 1$, the node $V_i$ updates its opinion $X_i(t)$ by averaging its own previous opinion with those of its neighbors:
\begin{align} 
    X_i(t) &= \frac{1}{N_i}\bigg[X_i(t-1) + \sum_{\substack{j=1, j\neq i}}^n a_{ij}\,X_j(t-1)\bigg]\\
    &= \frac{1}{N_i}\bigg[X_i(t-1) + \sum_{\substack{j=1, j\neq i}}^n Z_{\max(i,j)}\,X_j(t-1)\bigg],\label{equation:consensus}
\end{align}
where $X_i(0) = x_i(0)$ and
\[
    N_i := 1+\sum_{\substack{j=1, j\neq i}}^n Z_{\max(i,j)}
\]
denotes the number of neighbors of $V_i$ in $\G_n$ plus one.

To rewrite the updating rule \eqref{equation:consensus} in a compact matrix form, for all \(i,j\in\{1,\ldots,n\}\), we define the weight
\[
    W_{ij} := \frac{\mathbb{I}(i=j)+\mathbb{I}(i\neq j)Z_{\max(i,j)}}{N_i},
\]
and let \(W=[W_{ij}]\) be the corresponding \(n\times n\) matrix (deterministic given $Z^n=(z^{n-1},1)$ is fixed). Then \eqref{equation:consensus} can be written as
\begin{align*}
    X_i(t)
    &= \frac{1}{N_i}\bigg[X_i(t-1) + \sum_{\substack{j=1, j\neq i}}^n Z_{\max(i,j)}\,X_j(t-1)\bigg] \\*
    &= \sum_{j=1}^n W_{ij}\,X_j(t-1).
\end{align*}
If we denote \(\bar{X}(t)=(X_1(t),\ldots,X_n(t))^\intercal\), then in vector form, we have
\begin{align}\label{equation:x_W}
    \bar{X}(t) &= W \bar{X}(t-1)
         = W^t \bar{x}(0),
\end{align}
where $\bar{x}(0) = (x_1(0), x_2(0),\ldots, x_n(0))^\intercal$.

\begin{lemma}\label{Wirreducibleandaperiodic}
For any fixed draw realization $Z^n=(z^{n-1},1)$, the corresponding matrix \(W=[W_{ij}]\) is an irreducible, aperiodic and stochastic matrix.
\end{lemma}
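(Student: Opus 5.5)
We verify the three properties in turn for a fixed realization $Z^n=(z^{n-1},1)$, so that $W$ is deterministic.

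\textbf{Stochasticity.} By definition, every entry $W_{ij}$ is nonnegative, since $N_i\ge 1$ and the numerator lies in $\{0,1\}$. For the row sums we compute
\[
\sum_{j=1}^n W_{ij}=\frac{1+\sum_{j\neq i}z_{\max(i,j)}}{N_i}=1
\]
directly from the definition of $N_i$. Hence $W$ is row-stochastic.

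\textbf{Irreducibility.} Irreducibility of a nonnegative matrix is equivalent to strong connectivity of its associated directed graph, which has an arc $i\to j$ whenever $W_{ij}>0$. For $i\neq j$ we have $W_{ij}>0$ exactly when $z_{\max(i,j)}=1$. This condition is symmetric in $i$ and $j$, so the associated digraph is the undirected graph $\G_n$ with every edge replaced by two opposite arcs. Since $z_n=1$, node $V_n$ is universal. Thus $W_{in}=W_{ni}=1/N_i>0$ for every $i\neq n$, and any two indices $i,j$ are joined by the path $i\to n\to j$. It follows that the digraph is strongly connected and $W$ is irreducible. This is simply the connectivity of $\G_n$ noted at the start of Section~\ref{section:consensus}.

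\textbf{Aperiodicity.} Every diagonal entry satisfies $W_{ii}=1/N_i>0$, because the self-weight $\mathbb{I}(i=j)$ is always $1$ whatever the value of $z_i$. Each state therefore has a loop of length $1$, so its period is $1$. For an irreducible matrix all states share the same period, so $W$ is aperiodic. As a consequence, $W$ is primitive; this can also be seen directly, since $W^2$ has all entries positive via paths through node $n$.

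I do not expect a genuine obstacle here. The only point requiring care is that the diagonal of $W$ comes from the term $\mathbb{I}(i=j)$, which uses the constant $1$ rather than the self-loop indicator $z_i$. An isolated node with $z_i=0$ therefore still has a positive self-weight, and this is what makes aperiodicity hold for every realization.
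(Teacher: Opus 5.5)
Your proof is correct and follows the paper's argument essentially verbatim. Row sums equal $N_i/N_i=1$, irreducibility comes from paths of length at most two through the universal node $V_n$, and aperiodicity comes from the positive diagonal entries $W_{ii}=1/N_i$. One harmless slip: $W_{ni}=1/N_n=1/n$ rather than $1/N_i$, but you only use the positivity of $W_{in}$ and $W_{ni}$, so the conclusion is unaffected.
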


\begin{proof}
    We first verify that each row of \(W\) sums to \(1\). For any fixed \(i\in\{1,\ldots,n\}\),
    \begin{align*}
        \sum_{j=1}^n W_{ij}
        &= \sum_{j=1}^n \frac{\mathbb{I}(i=j)+\mathbb{I}(i\neq j)z_{\max(i,j)}}{N_i} \\
        &= \frac{1+\sum_{\substack{j=1, j\neq i}}^n z_{\max(i,j)}}{N_i} \\
        &= \frac{N_i}{N_i} \\
        &=1,
    \end{align*}
    which implies that \(W\) is a stochastic matrix.

    Since \(Z_n=1\), node \(V_n\) is adjacent to all other nodes. Thus, for any \(i,j\in\{1,\ldots,n\}\) there is a path from \(V_i\) to \(V_j\) of length at most \(2\) through \(V_n\). Hence \(W\) is irreducible.

    Moreover, $W_{ii}=\frac{1}{N_i}>0$ for all $i$, reflecting that each node always assigns positive weight to its own previous opinion in \eqref{equation:consensus}. Hence, the Markov chain associated with $W$ has a positive self-transition probability for all states, and since $W$ is irreducible, its period is $1$, i.e., $W$ is aperiodic.
\end{proof}

For $i\in\{1,\ldots,n\}$, define
\[
    \pi^*:=(\piz_1,\ldots,\piz_n),
    \qquad
    \piz_i:=\frac{N_i}{\sum_{k=1}^n N_k}.
\]
Here $\pi^*$ is viewed as a row vector, so $\mathbf{1}\pi^*$ is an $n\times n$ rank-one stochastic matrix, where $\mathbf{1}$ is a size-$n$ column vector of $1$s. Following standard results in Markov chain theory \autocite{seneta2006nonnegative}, we have the following property.

\begin{lemma}\label{WhasPi}
    For any fixed draw realization $Z^n=(z^{n-1},1)$, the vector \(\pi^*\) is the unique stationary distribution of \(W\)
    \[
        W^t \xrightarrow[t\to\infty]{} \mathbf{1}\pi^*.
    \]
\end{lemma}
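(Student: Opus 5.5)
The plan has three steps: show that $\pi^*$ is stationary for $W$, get uniqueness from Lemma~\ref{Wirreducibleandaperiodic}, and then obtain convergence of $W^t$ from the Perron--Frobenius / Markov chain convergence theorem for finite irreducible aperiodic stochastic matrices~\autocite{seneta2006nonnegative}.

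\textbf{Stationarity.} The key observation is that $W$ is a random-walk matrix on a graph with a symmetric weight. Define
\[
S_{ij}:=\mathbb{I}(i=j)+\mathbb{I}(i\neq j)\,z_{\max(i,j)} .
\]
This matrix is symmetric, since $\max(i,j)=\max(j,i)$. By definition $W_{ij}=S_{ij}/N_i$ with $N_i=\sum_j S_{ij}$, so $N_iW_{ij}=S_{ij}=S_{ji}=N_jW_{ji}$. This is the detailed balance (reversibility) relation for $N$. Summing over $i$ gives
\[
\sum_i N_iW_{ij}=\sum_i N_jW_{ji}=N_j,
\]
where the last equality is the row-stochasticity shown in Lemma~\ref{Wirreducibleandaperiodic}. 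Normalizing by $\sum_k N_k>0$ (each $N_k\ge 1$) yields $\pi^*W=\pi^*$, and $\pi^*$ is a probability vector with strictly positive entries.

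\textbf{Uniqueness.} Lemma~\ref{Wirreducibleandaperiodic} gives that $W$ is irreducible. By Perron--Frobenius, the eigenvalue $1$ of an irreducible stochastic matrix is simple, so its left eigenspace is one-dimensional. Hence $\pi^*$ is the only probability vector satisfying $\pi W=\pi$.

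\textbf{Convergence.} Lemma~\ref{Wirreducibleandaperiodic} also gives aperiodicity, so $W$ is primitive. One can also see primitivity directly: every entry of $W^2$ is positive, because every pair of nodes is joined through $V_n$ and every node carries a self-weight. For a primitive matrix, every eigenvalue other than $1$ has modulus strictly less than $1$. Therefore $W^t\to \mathbf{1}v$, where $\mathbf{1}$ is the right Perron vector and $v$ is the left Perron vector normalized so that $v\mathbf{1}=1$. By the previous two steps, $v=\pi^*$.

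An alternative route uses the symmetrization $D_N^{1/2}WD_N^{-1/2}$, which is symmetric by the detailed balance relation. This gives real eigenvalues in $(-1,1]$ directly, with $-1$ excluded by aperiodicity.

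The only nonroutine step is identifying the stationary vector, and the symmetry of $S$ resolves it immediately. The remaining steps are citations of standard results.
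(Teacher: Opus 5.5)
Your proposal is correct and follows essentially the same route as the paper: the detailed-balance identity $N_iW_{ij}=N_jW_{ji}$, coming from the symmetry of $z_{\max(i,j)}$, is summed over $i$ to get $\pi^*W=\pi^*$, and uniqueness and $W^t\to\mathbf{1}\pi^*$ then follow from the irreducibility and aperiodicity in Lemma~\ref{Wirreducibleandaperiodic} via the standard Perron--Frobenius / ergodic Markov chain theorem. Your direct check that $W^2$ is entrywise positive (every walk can pass through $V_n$) and the symmetrization $D_N^{1/2}WD_N^{-1/2}$ are valid extra confirmations of that standard step, but the argument does not need them.
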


\begin{proof}
    For any \(i,j\in\{1,\ldots,n\}\), by the definition of \(W_{ij}\), we have that
    \begin{align*}
        N_i W_{ij} &= \mathbb{I}(i=j)+\mathbb{I}(i\neq j)z_{\max(i,j)},\\
        N_j W_{ji} &= \mathbb{I}(j=i)+\mathbb{I}(j\neq i)z_{\max(i,j)}.
    \end{align*}
    Since the conditions \(i\neq j\) and \(j\neq i\) are equivalent, it follows that
    \[
        N_i W_{ij} = N_j W_{ji},
    \]
    with
    \[
        \piz_i W_{ij} = \piz_j W_{ji}.
    \]
    Fixing $j$ and summing over \(i\) on both sides yields
    \begin{align}
        \sum_{i=1}^n \piz_i W_{ij}
        &= \sum_{i=1}^n \piz_j W_{ji} \nonumber\\*
        &= \piz_j \sum_{i=1}^n W_{ji} \nonumber\\*
        &= \piz_j,
        \label{equation:MarkovChainStationaryDistri}
    \end{align}
    where we used the fact that each row of \(W\) sums to \(1\). Thus, \(\pi^* W=\pi^*\), i.e., \(\pi^*\) is a stationary distribution of \(W\). By Lemma~\ref{Wirreducibleandaperiodic}, \(W\) is irreducible and aperiodic, hence the stationary distribution is unique and the associated Markov chain is ergodic. Consequently, we have by~\autocite{Serfozo2009Basics} that
 $
        W^t \xrightarrow[t\to\infty]{} \mathbf{1}\,\pi^*,
  $
    as claimed.
\end{proof}

We now use Lemma~\ref{WhasPi} to describe the consensus limit of the process. 

\begin{theorem}\label{theorem:consensuslimit}
    Let \(\G_n=(\V_n,\E_n)\) be a connected P\'olya threshold graph with draw indicator n-tuples \(Z^n=(Z_1,Z_2,\ldots,Z_{n-1},1)\). Then
    \begin{align*}
        \limt E\big(\bar{X}(t)\big) = [\mathbf{1}\pi^{(E)}]\bar{x}(0),
    \end{align*}
where 
$\pie = (\pi^{(E)}_1,\pi^{(E)}_2,\ldots,\pi^{(E)}_n)$
and
    \begin{align}\label{equation:pie}
       \pi^{(E)}_i
        := \sum_{z_1}\cdots\sum_{z_{n-1}}
            P(Z_1=z_1,\ldots,Z_{n-1}=z_{n-1})\,\piz_i.
    \end{align}
\end{theorem}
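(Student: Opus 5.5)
The plan is to condition on the draw realization, apply Lemma~\ref{WhasPi} pathwise, and then pass the limit through a finite expectation. Since $Z_n=1$ is fixed, the randomness is carried entirely by $(Z_1,\ldots,Z_{n-1})$, which takes at most $2^{n-1}$ values. For each fixed realization $z^{n-1}$, equation~\eqref{equation:x_W} gives $\bar{X}(t)=W(z^{n-1})^t\bar{x}(0)$, where $W(z^{n-1})$ is the deterministic weight matrix built from $(z^{n-1},1)$ and $\bar{x}(0)$ is deterministic. By the law of total expectation,
\begin{align*}
E\big(\bar{X}(t)\big)=\sum_{z_1}\cdots\sum_{z_{n-1}} P(Z_1=z_1,\ldots,Z_{n-1}=z_{n-1})\,W(z^{n-1})^t\,\bar{x}(0).
\end{align*}
Here $P(\cdot)$ denotes the law of the first $n-1$ P\'olya draws, given by \eqref{jointprobofPolya_product}. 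If one prefers to read the model as conditioning on $Z_n=1$, the weights are replaced by the corresponding conditional probabilities. In either reading they are fixed nonnegative numbers summing to one, and the argument below does not depend on which is used. I will keep the weights exactly as written in \eqref{equation:pie}.

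Next, I apply Lemma~\ref{WhasPi} to each realization. For every fixed $z^{n-1}$, Lemma~\ref{Wirreducibleandaperiodic} shows that $W(z^{n-1})$ is stochastic, irreducible and aperiodic, so Lemma~\ref{WhasPi} gives $W(z^{n-1})^t\to\mathbf{1}\pi^*(z^{n-1})$ entrywise. Here $\pi^*_i(z^{n-1})=N_i/\sum_k N_k$ is evaluated at that realization.

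The main point is justifying the interchange of the limit with the expectation. This is immediate because the sum has finitely many terms: the limit of a finite linear combination of convergent matrix sequences is the same combination of their limits. Dominated convergence is not even needed, although it would also apply, since $\|W^t\bar{x}(0)\|_\infty\le\|\bar{x}(0)\|_\infty$ for a stochastic $W$. This yields
\begin{align*}
\limt E\big(\bar{X}(t)\big)=\sum_{z_1}\cdots\sum_{z_{n-1}} P(Z_1=z_1,\ldots,Z_{n-1}=z_{n-1})\,\mathbf{1}\pi^*(z^{n-1})\,\bar{x}(0).
\end{align*}

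Finally, I pull out the deterministic factors $\mathbf{1}$ and $\bar{x}(0)$ by linearity. The remaining row vector has $i$-th entry $\sum_{z^{n-1}}P(\cdot)\,\pi^*_i(z^{n-1})=\pi^{(E)}_i$ as defined in \eqref{equation:pie}. This gives $[\mathbf{1}\pi^{(E)}]\bar{x}(0)$, which is the statement. I expect no serious obstacle; the only care needed is notational, namely to make explicit that $\pi^*$ in \eqref{equation:pie} depends on $z^{n-1}$ through the $N_i$.
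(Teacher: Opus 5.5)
Your proposal is correct and follows essentially the same route as the paper. Both write $E(W^t)$ as a finite probability-weighted sum over $z^{n-1}\in\{0,1\}^{n-1}$, apply Lemma~\ref{WhasPi} to each realization, and move the limit through the finite sum. Your explicit note that $\pi^*$ depends on $z^{n-1}$ through the $N_i$ makes precise a point the paper leaves implicit.
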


\begin{proof}
    From \eqref{equation:x_W}, we have \(\bar{X}(t)=W^t \bar{x}(0)\). Since \(\bar{x}(0)\) is deterministic,
    \begin{align*}
        \limt E\big(\bar{X}(t)\big)
        &= \Big(\limt E\big(W^t\big)\Big) \bar{x}(0).
    \end{align*}
    Note that
    \begin{align*}
        \limt E\big(W^t\big)
        &= \limt \sum_{z_1}\cdots\sum_{z_{n-1}}
            P(Z_1=z_1,\ldots,Z_{n-1}=z_{n-1})\,W^t.
    \end{align*}
    Since the sum is over a finite state space $z^{n-1}\in\{0,1\}^{n-1}$, we may interchange the limit and the sum. For each fixed draw vector \(Z^n = (z^{n-1}, 1)\), Lemma~\ref{WhasPi} implies that
    \[
        W^t \xrightarrow[t\to\infty]{} \mathbf{1}\,\pi^*.
    \]
    Therefore,
    \begin{align*}
        \limt E\big(W^t\big)
        &= \sum_{z_1}\cdots\sum_{z_{n-1}}
            P(Z_1=z_1,\ldots,Z_{n-1}=z_{n-1})\,\mathbf{1}\,\pi^*,
    \end{align*}
    and the stated expression for \(\lim_{t\to\infty}E(\bar{X}(t))\) follows immediately.
\end{proof}

Having established the consensus limit, our next goal is to show that $\pie$ is the stationary distribution of the limiting expected averaging matrix, i.e.,
\[
   \pie\Big(\limt E(W^t)\Big)=\pie.
\]
To do so, we first verify that \(\lim_{t\to\infty}E(W^t)\) is itself a stochastic matrix.

\begin{lemma}\label{E(W^t) primitive}
    The limit
    \[
        \limt E(W^t)
    \]
    is a stochastic matrix. Moreover, \(\lim_{t\to\infty}E(W^t)\) is irreducible and aperiodic. 
\end{lemma}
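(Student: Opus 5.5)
By the proof of Theorem~\ref{theorem:consensuslimit}, the limit exists and can be written explicitly as
\[
M:=\limt E(W^t)=\sum_{z_1}\cdots\sum_{z_{n-1}} P(Z_1=z_1,\ldots,Z_{n-1}=z_{n-1})\,\mathbf{1}\pi^*,
\]
where $\pi^*=\pi^*(z^{n-1})$ is the stationary distribution from Lemma~\ref{WhasPi} for the realization $Z^n=(z^{n-1},1)$. Every summand $\mathbf{1}\pi^*$ is a rank-one stochastic matrix. Its entries are $\pi^*_j\ge 0$, and each row sums to $\sum_j\pi^*_j=1$. A finite convex combination of stochastic matrices is stochastic, since nonnegativity and unit row sums both survive averaging. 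So $M$ is stochastic, and all of its rows equal the same vector, namely $\pie$ from \eqref{equation:pie}. In other words, $M=\mathbf{1}\pie$.

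For irreducibility and aperiodicity, I would show that $\pie$ has all coordinates strictly positive. For every realization and every $i$, we have $N_i\ge 1$, and since $z_n=1$, also $N_i\ge 2$ for $i<n$. Hence $\pi^*_i=N_i/\sum_k N_k>0$. Every realization $z^{n-1}\in\{0,1\}^{n-1}$ has positive probability under \eqref{jointprobofPolya_product}, because $R,B>0$. Therefore $\pi^{(E)}_i$ is a positive combination of positive numbers, and so $\pi^{(E)}_i>0$ for all $i$.

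It follows that $M=\mathbf{1}\pie$ has every entry $M_{ij}=\pi^{(E)}_j>0$. A strictly positive matrix is irreducible, since every pair of states communicates in one step. It is also aperiodic, since $M_{ii}>0$ for every $i$, just as in Lemma~\ref{Wirreducibleandaperiodic}. Equivalently, $M$ is primitive.

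The only step that needs care is justifying the interchange of limit and expectation, together with the positivity of $\pie$. The interchange is already handled by the finiteness of the sum over $\{0,1\}^{n-1}$ in the proof of Theorem~\ref{theorem:consensuslimit}. Positivity requires only the observation that $N_i\ge1$ and that each realization has positive probability. Once the lemma is in place, the stationarity claim $\pie M=\pie$ follows at once from $\pie\mathbf{1}=1$.
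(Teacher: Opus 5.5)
Your proof is correct and follows the paper's argument. Like the paper, you write the limit as a convex combination of the rank-one stochastic matrices $\mathbf{1}\pi^*$ over realizations of strictly positive probability, and then use $\pi^*_i>0$ to show the limit is entrywise positive, hence irreducible and aperiodic. Your remark that $N_i\ge 1$ alone already gives $\pi^*_i>0$ is a slightly more direct justification than the paper's appeal to $V_n$ being adjacent to all other nodes.
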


\begin{proof}
    By Theorem~\ref{theorem:consensuslimit} and Lemma~\ref{WhasPi}, we have
    \begin{align}
        \limt E(W^t)
        &= \sum_{z_1}\cdots\sum_{z_{n-1}}
            P(Z_1=z_1,\ldots,Z_{n-1}=z_{n-1})\,\mathbf{1}\,\pi^*.
        \label{equation:E(W^t)}
    \end{align}
    Since $\mathbf{1}\piz$ is a stochastic matrix, the limit $\limt E(W^t)$ is a convex combination of stochastic matrices and hence is itself a stochastic matrix.

    Furthermore, the P\'olya urn process assigns a strictly positive probability to each draw vector \(z^{n-1}\), and \(\piz_i>0\) for every \(i\in\{1\ldots,n\}\) because \(V_n\) is a neighbor of all other nodes. It follows from \eqref{equation:E(W^t)} that every entry of \(\lim_{t\to\infty}E(W^t)\) is strictly positive. Hence $\lim_{t\to\infty}E(W^t)$ is a positive (thus primitive) stochastic matrix, and in particular irreducible and aperiodic.
\end{proof}

Next, by Lemma~\ref{E(W^t) primitive} we obtain the following result.

\begin{theorem}\label{theorem:pi and W^t}
    The vector \(\pie=(\pie_1,\ldots,\pie_n)\) defined by \eqref{equation:pie} is a stationary distribution of the limiting expected averaging matrix \(\limt E(W^t)\), which satisfies
    \[
       \pie\Big[\limt E(W^t)\Big] =\pie.
    \]
\end{theorem}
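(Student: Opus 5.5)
The key observation is that, by \eqref{equation:E(W^t)} in the proof of Lemma~\ref{E(W^t) primitive}, the limiting matrix is itself rank one with identical rows. Pulling the constant column vector $\mathbf{1}$ out of the finite convex combination gives
\begin{align*}
\limt E(W^t) = \mathbf{1}\Big(\sum_{z_1}\cdots\sum_{z_{n-1}} P(Z_1=z_1,\ldots,Z_{n-1}=z_{n-1})\,\pi^*\Big) = \mathbf{1}\,\pie,
\end{align*}
where the second equality is exactly the definition \eqref{equation:pie}, applied componentwise. So the first step is simply to record $\limt E(W^t)=\mathbf{1}\pie$. This is also the content of Theorem~\ref{theorem:consensuslimit}, whose proof computes precisely this limit before applying it to $\bar{x}(0)$.

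Next I will check that $\pie$ is a probability vector, i.e., $\pie\mathbf{1}=1$. For each fixed realization, $\pi^*\mathbf{1}=\sum_i N_i/\sum_k N_k=1$. Averaging over the P\'olya law of $(Z_1,\ldots,Z_{n-1})$, whose probabilities sum to one, preserves this, so $\pie\mathbf{1}=1$. Nonnegativity, and in fact strict positivity, of $\pie$ follows as in the proof of Lemma~\ref{E(W^t) primitive}.

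The stationarity claim is then one line, using associativity:
\begin{align*}
\pie\Big[\limt E(W^t)\Big] = \pie\,\mathbf{1}\,\pie = (\pie\mathbf{1})\,\pie = \pie.
\end{align*}
Optionally, I would add that Lemma~\ref{E(W^t) primitive} (positivity, hence irreducibility and aperiodicity) implies this stationary distribution is unique. The statement only asks for \emph{a} stationary distribution, however, so this is a remark rather than part of the proof.

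I do not expect a genuine obstacle. The only point requiring care is justifying that the limit and the expectation interchange to give the rank-one form $\mathbf{1}\pie$. This is already handled by the finiteness of the state space $\{0,1\}^{n-1}$ in the proof of Theorem~\ref{theorem:consensuslimit}, together with the fact that $\mathbf{1}$ does not depend on the realization.
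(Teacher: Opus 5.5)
Your proposal is correct and matches the paper's proof essentially step for step. You identify $\limt E(W^t)=\mathbf{1}\pie$ from \eqref{equation:E(W^t)}, note that $\pie\mathbf{1}=1$ because $\pie$ is a convex combination of the probability vectors $\piz$, and conclude $\pie(\mathbf{1}\pie)=(\pie\mathbf{1})\pie=\pie$; your added remark on uniqueness via Lemma~\ref{E(W^t) primitive} is harmless but not needed.
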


\begin{proof}
    From \eqref{equation:E(W^t)} we can write the limiting matrix $\limt E(W^t)$ as  $\mathbf{1}\,\pie$. Since $\pie$ is a convex combination of probability vectors $\piz$, it is strictly a probability vector satisfying $\pie\mathbf{1}= 1$. Therefore,
    \begin{align*}
       \pie\Big[\limt E(W^t)\Big]
        &=\pie\big(\mathbf{1}\,\pie\big) = \big(\pie\mathbf{1}\big)\,\pie =\pie,
    \end{align*}
    which completes the proof.
\end{proof}

\begin{remark}\label{remark:consensuslimit}
    We emphasize that the consensus limits established in Theorems~\ref{theorem:consensuslimit} and~\ref{theorem:pi and W^t} are structurally decoupled from the specific joint distribution introduced by the P\'olya urn process. The derivation relies fundamentally on the algebraic structure of the threshold graph's adjacency matrix. Consequently, these results apply to any random threshold graph generated by an arbitrary binary stochastic process $\{Z_i\}^n_{i=1}$, provided that the last node is universal (i.e., $Z_n = 1$).
\end{remark}

\subsection{Simulation results}
In this section, we illustrate the convergence of consensus dynamics on the P\'olya threshold graphs and validate the consensus limit obtained in Theorem~\ref{theorem:consensuslimit}. We first consider the standard (i.e., infinite-memory) model we examined throughout this paper, followed by an exploration of a finite-memory variant to highlight the impact of the memory length on the consensus limit. 

\subsubsection{P\'olya threshold graph}\label{Infinite-memory process}
We first fix the graph size to $n = 10$ and the P\'olya urn parameters to $(R,B,\Delta)=(5, 5, 2)$. In each run, we generate the graph by simulating the P\'olya urn process for the first $n-1$ nodes, and deterministically assign the $n$-th node as a universal node. We construct the corresponding P\'olya threshold graph as in Lemma~\ref{lemma:thresholdgraph}, and initialize the opinions as $(0.1, 0.6, 0.3, 1, 0.5, 3, 10, 2, 9, 0.2)^\intercal$. The consensus then evolves according to \eqref{equation:consensus}. Over 200 independent runs, we compute the empirical average of the consensus limit and compare it with the theoretical prediction from Theorem~\ref{theorem:consensuslimit}.

Figure~\ref{fig:traj_n10} shows one realization of a P\'olya threshold graph with $n = 10$ and the draw vector
$z^{10} = (0,0,1,0,1,1,0,1,0,1)$.
Trajectories of the isolated nodes are plotted with dashed lines, while those of universal nodes are plotted with solid lines.
We can see that nodes with larger degrees converge more rapidly toward the consensus value, which is consistent with the intuition
that highly connected nodes place and receive more weight in the averaging dynamics.
Figure~\ref{fig:hist_n=10} shows the distribution of consensus values at time $t = 100$ over $200$ independent simulations with the above setup.
We use a dashed vertical line to represent the empirical average of the $200$ consensus values, while the solid vertical line indicates the theoretical value $\pie \bar{x}(0)$.
The empirical mean closely matches the theoretical value $\pie \bar{x}(0)$, which is consistent with Theorem~\ref{theorem:consensuslimit}.

Finally, we repeat the same experiment with $n=100$ (still fixing $z_{100}$ to be $1$) to compare with the previous simulation. We use the same initial opinions as in the $n = 10$ case and extend them to length $100$ with repetition: $x_{i+10k}(0)=x_i(0)$, $i\in\{1,\ldots,10\}$, $k\in\{0,\ldots,9\}$. Equivalently, $x_i(0)=x_{i+10}(0)=x_{i+20}(0)=\cdots=x_{i+90}(0)$ for all $i\in\{1,\ldots, 10\}$. Figure~\ref{fig:traj_n100} shows the consensus trajectories of different nodes, mirroring the curves shown for the size-10 graph in Figure~\ref{fig:traj_n10}.
Figure~\ref{fig:consensus_limits_vs_pi(n=100)} gives the corresponding histogram of consensus limits over $200$ runs, together with the same two vertical references sample mean and $\pie \bar{x}(0)$.
This provides a direct comparison between the cases $n=10$ and $n=100$ with identical urn parameters and a consistent choice of $\bar{x}(0)$.

\begin{figure}[htbp]
    \centering
    \begin{subfigure}[t]{0.59\textwidth}
        \centering
        \includegraphics[width=\linewidth]{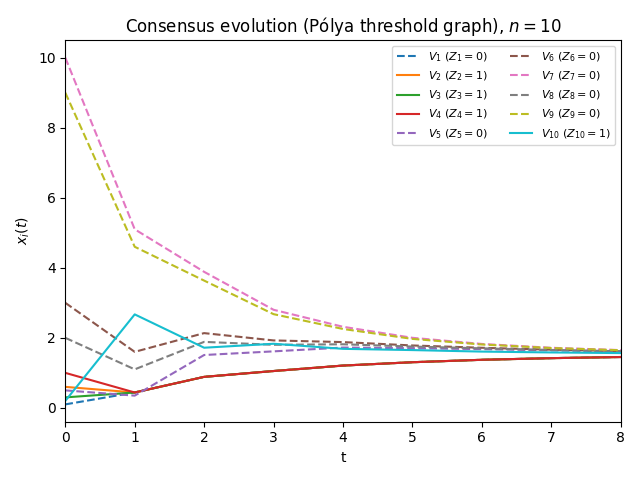}
    \end{subfigure}\hfill%
    \begin{subfigure}[t]{0.34\textwidth}
        \centering
        \resizebox{\linewidth}{!}{%
        \begin{tikzpicture}[
            every node/.style={circle, draw, minimum size=6mm, inner sep=0pt, font=\small},
            universal/.style={fill=black!20, very thick},
            isolated/.style={fill=white},
        ]
        \foreach \i/\z in {1/0,2/0,3/1,4/0,5/1,6/1,7/0,8/1,9/0,10/1}{
            \pgfmathsetmacro{\ang}{90 - 36*(\i-1)}
            \ifnum\z=1
                \node[universal] (v\i) at (\ang:2.2cm) {\i};
            \else
                \node[isolated] (v\i) at (\ang:2.2cm) {\i};
            \fi
        }
        \begin{scope}[on background layer]
            \foreach \j in {3,5,6,8,10}{
                \pgfmathtruncatemacro{\k}{\j-1}
                \foreach \i in {1,...,\k}{
                    \draw (v\i) -- (v\j);
                }
            }
        \end{scope}
        \end{tikzpicture}%
        }
    \end{subfigure}

    \caption{Consensus trajectories for one realization of a P\'olya threshold graph with $n=10$, urn parameters $(R,B,\Delta)=(5,5,2)$, and draw vector $z^{10}=(0,0,1,0,1,1,0,1,0,1)$. The initial opinions are fixed as $\bar{x}(0)=(0.1,0.6,0.3,1,0.5,3,10,2,9,0.2)^\intercal$. Dashed curves and white nodes correspond to isolated nodes; solid curves and dark nodes correspond to universal nodes.}
    \label{fig:traj_n10}
\end{figure}

\begin{figure}[htbp]
    \centering
    \includegraphics[width=0.7\linewidth, trim=0 10 0 10, clip]{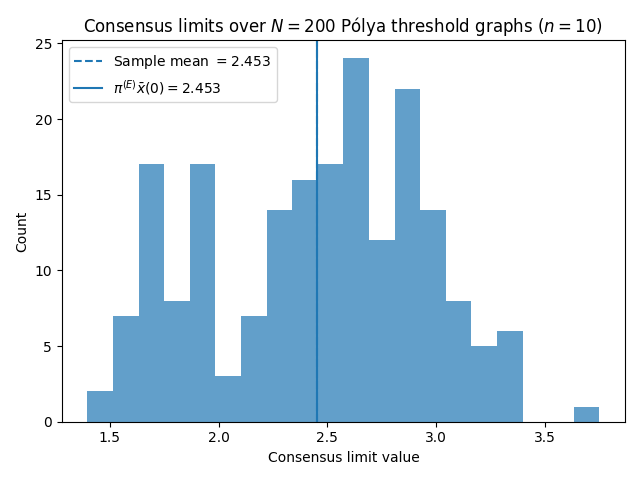}
    \caption{Histogram of consensus values at time $t=100$ over $200$ independent simulations with $n=10$ and urn parameters $(R,B,\Delta)=(5,5,2)$, with $z_{10}$ fixed to be $1$. The initial opinion vector is fixed as $\bar{x}(0)=(0.1,0.6,0.3,1,0.5,3,10,2,9,0.2)^\intercal$. The dashed vertical line indicates the sample mean of the consensus values, and the solid vertical line indicates the theoretical prediction $\pie \bar{x}(0)$.}
    \label{fig:hist_n=10}
\end{figure}

\begin{figure}[htbp]
    \centering
    \begin{subfigure}[t]{0.49\textwidth}
        \centering
        \includegraphics[width=\linewidth]{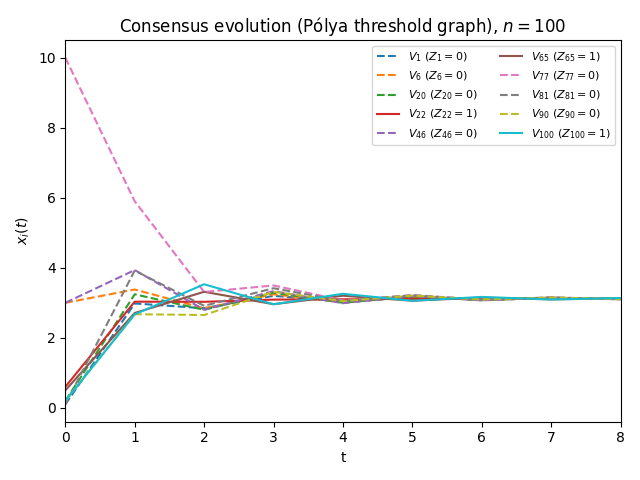}
        \caption{Consensus trajectories for one realization of a P\'olya threshold graph with $n=100$, urn parameters $(R,B,\Delta)=(5,5,2)$, and draw vector $z^{100}$.}
        \label{fig:traj_n100}
    \end{subfigure}\hfill%
    \begin{subfigure}[t]{0.49\textwidth}
        \centering
        \includegraphics[width=\linewidth]{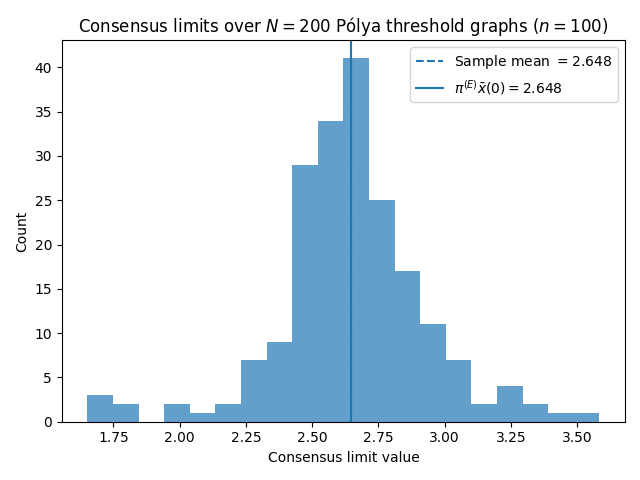}
        \caption{Histogram of consensus limits with $n=100$ and $(R,B,\Delta)=(5,5,2)$, with $z_{100}$ fixed to be $1$. The dashed vertical line is the sample mean of the simulated consensus limits, and the solid vertical line is the $\pie \bar{x}(0)$.}
        \label{fig:consensus_limits_vs_pi(n=100)}
    \end{subfigure}
    \caption{Simulation results for a P\'olya threshold graph with $n =100$, urn parameters $(R,B,\Delta)=(5,5,2)$, and draw vector $z^{100}$ (fixed $z_{100}=1$). }
    \label{fig:n100_combined}
\end{figure}

\FloatBarrier

\subsubsection{Finite-memory P\'olya threshold graph}
We also consider a finite memory version of the P\'olya urn introduced in~\autocite{alajaji2002communication}, a variation often studied in the literature of networked P\'olya urn processes, see~\autocite{singh-consensus22,singh-siam22}. In the P\'olya threshold graph model discussed previously, every draw permanently alters the urn's composition and, consequently, the probability of all future draws, thereby constituting an infinite-memory process. In the finite-memory variant, the only difference is that the $\Delta$ reinforcement balls added at time step $t$ are removed from the urn exactly $M$ steps later (i.e., at time step $t+M$). Consequently, when $t\leq M$, the probability distribution remains unchanged as that of the (infinite-memory) P\'olya urn process given in~\eqref{jointprobofPolya_product}-\eqref{jointProbofPolya}, as no reinforcement balls have been removed yet (thus the graph generated for $t\le M$ is the same as the original P\'olya graph).

However, when $t>M$, this finite-memory setup fundamentally alters the underlying stochastic dependence of the draw process $\{Z_t\}$. Under this finite-memory assumption, the draw process $\{Z_t\}$ becomes a Markov process of order $M$, as demonstrated in~\autocite{alajaji2002communication}. More specifically, the probability distribution is explicitly given by
\begin{align}
P(Z_1=z_1, \ldots, &Z_n=z_n) = \left( \frac{\prod_{i=0}^{\left(\sum_{m=1}^M z_m\right)-1}(\rho+i\delta) \prod_{j=0}^{M-1-\sum_{m=1}^M z_m}(1-\rho+j\delta)}{\prod_{l=1}^{M-1}(1+l\delta)} \right)\\
&\times\prod_{i=M+1}^{n} \left[ \frac{\rho + \delta \sum_{m=1}^M z_{i-m}}{1+M\delta} \right]^{z_i} \left[ \frac{1-\rho + \delta \left(M - \sum_{m=1}^M z_{i-m}\right)}{1+M\delta} \right]^{1-z_i}, \nonumber
\end{align}
for any $(z_1,\ldots,z_n)\in \{0,1\}^n$.

\medskip

As stated in Remark~\ref{remark:consensuslimit}, since the network retains the threshold graph structure, the analytical framework established in Theorem~\ref{theorem:consensuslimit} remains fully applicable.  
To evaluate the consensus dynamics under this finite-memory regime, we simulate the connected P\'olya threshold graph of size $n=10$ and $n = 100$ with memory step $M$. To satisfy the connectivity condition $Z_n =1$, we generate the graphs by simulating first $n-1$ nodes under the finite-memory process with the probability described above, and deterministically setting the final node to be a universal node (i.e., $Z_n = 1$). To capture the impact of memory length, we choose the initial opinion $\bar{x}(0)$ to be a polarized vector, assigning $0$ to the first half of $\bar{x}(0)$ and $100$ to the second half. By assigning maximally different initial opinions, any structural shift caused by memory length $M$ becomes visible in the final consensus. For each $M$ value, we compute the empirical expected consensus $\pie\bar{x}(0)$ by averaging 100 independent random threshold graphs.

Figures~\ref{fig:piE_vs_M_n10} and~\ref{fig:piE_vs_M_n100} plot the expected consensus versus the memory length $M$ for $\delta\in\{0.2, 1, 10\}$, along with their corresponding infinite-memory baseline (dashed lines). As observed in both figures, the impact of the memory length $M$ is sensitive to the reinforcement parameter $\delta$. For weak reinforcement (e.g., $\delta = 0.2$), the finite memory curves remain close to the infinite-memory baseline across all values of $M$, as the delayed removal only slightly affects the urn's overall composition. Conversely, under strong reinforcement (e.g., $\delta = 10$), the finite-memory curve deviates significantly as $M$ increases. In all cases, as $M$ increases and approaches the size of the graph $n$, the consensus limit naturally converges to the infinite-memory values.

\begin{figure}[htbp]
    \centering
    \begin{subfigure}[t]{0.48\textwidth}
        \centering
        \includegraphics[width=\linewidth]{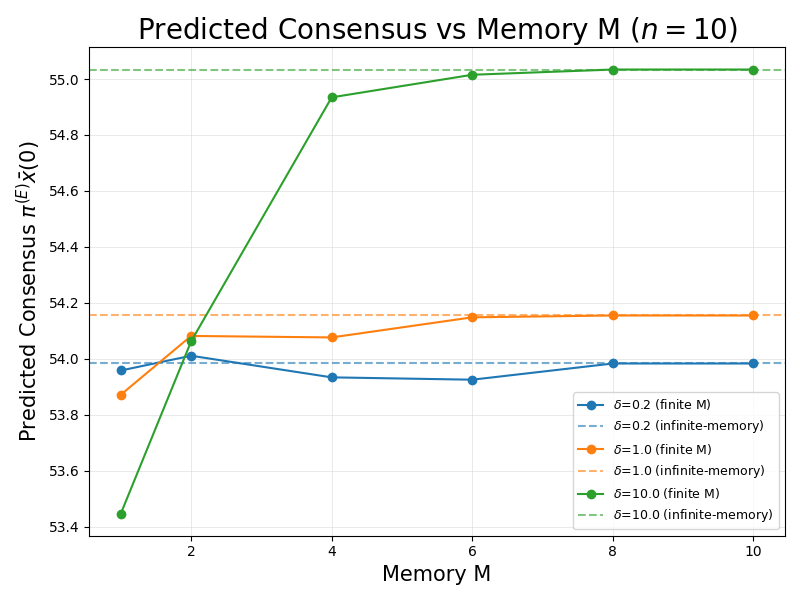}
        \subcaption{$n=10$}
        \label{fig:piE_vs_M_n10}
    \end{subfigure}\hfill%
    \begin{subfigure}[t]{0.48\textwidth}
        \centering
        \includegraphics[width=\linewidth]{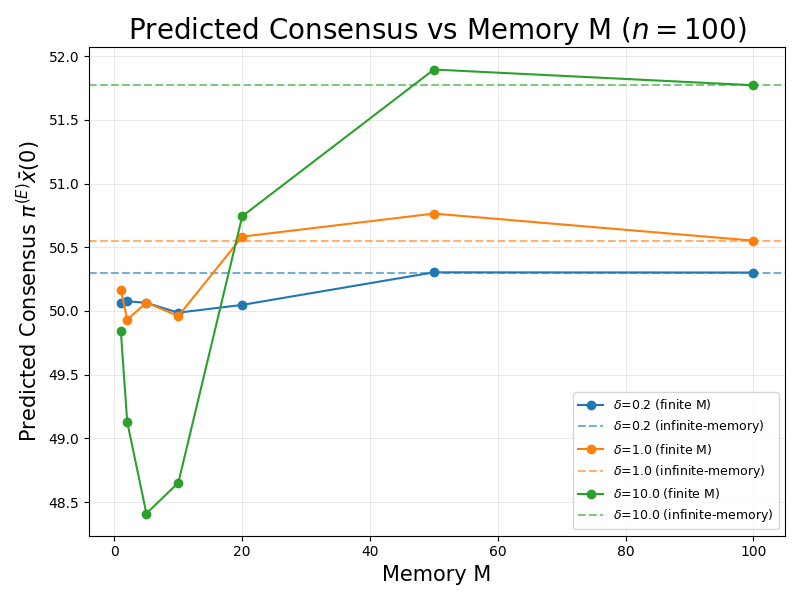}
        \subcaption{$n=100$}
        \label{fig:piE_vs_M_n100}
    \end{subfigure}

    \caption{$\pi^{(E)}\bar{x}(0)$ versus memory length $M$ for $n=10$ and $n=100$ (solid: finite-memory; dashed: infinite-memory).}
    \label{fig:piE_vs_M_both}
\end{figure}
\FloatBarrier

\section{Conclusion}\label{sec:conclusion}
In this paper, we introduced the P\'olya threshold graph model, a random graph construction that combines the nested structure of threshold graphs with the reinforcement mechanism of the two-color P\'olya urn process. We analytically derived its stochastic and algebraic properties. More specifically, based on the exchangeability and Beta-Binomial structure of the P\'olya urn process, we obtained the degree distribution of any node in a given P\'olya threshold graph, as well as its mean and variance. Furthermore, we derived an explicit formula for the expected decay centrality score.
Finally, we established the Laplacian matrix of the P\'olya threshold graph and characterized its spectrum and corresponding eigenbasis.

As an application of these results, we analyzed discrete-time linear consensus dynamics over connected P\'olya threshold graphs. We demonstrated that the consensus dynamics yield an irreducible and aperiodic averaging matrix, allowing us to fully characterize the limiting expected consensus value. The simulation results strongly align with our theoretical predictions, illustrating how highly connected nodes converge more rapidly toward the consensus value. Additionally, by exploring a finite-memory P\'olya urn variant, we illustrated how the length of memory affects the resulting expected consensus.

Several directions for future research naturally arise from this work. One possible extension is the study of P\'olya threshold graphs generated by multi-color P\'olya urn models, which would allow for more general node-connection mechanisms. Another promising direction is the analysis of stochastic properties of the finite-memory model, including the investigation of its Markovian structure and the dependence of convergence rates on the memory size.


\bibliographystyle{plainnat}
\bibliography{bib}

\end{document}